\newcommand{\Agt}{\mathit{Agt}}
\newcommand{\WMLOKP}{\mathbf{WMLOKP}} 
\newcommand{\WMLOP}{\mathbf{PMLO}} 
\newcommand{\commentout}[1]{}
\newcommand{\I}{\mathcal{I}}
\newcommand{\R}{\mathcal{R}}
\newcommand{\K}{\mathcal{K}}
\newcommand{\cS}{\mathcal{S}}
\newcommand{\Pspace}{\mathbf{Pr}}
\newcommand{\F}{\mathcal{F}}
\newcommand{\G}{\mathcal{G}}
\newcommand{\Nat}{\mathbb{N}}
\newcommand{\Rat}{\mathbb{Q}}
\newcommand{\Real}{\mathbb{R}}
\newcommand{\PLTLsK}{\mbox{$\mathbf{CTL^*KP}$}\xspace }
\newcommand{\PCTLK}{\mbox{$\mathbf{CTLPK}$}\xspace}
\newcommand{\CTL}{\mbox{$\mathbf{CTL}$}\xspace}
\newcommand{\infinitepaths}[1]{\mathtt{P}_\infty(#1)}
\newcommand{\pathsover}[1]{\uparrow #1}
\newcommand{\rimp}{\Rightarrow}
\newcommand{\dimp}{\Leftrightarrow}
\newcommand{\prob}{{\tt Pr}}
\newcommand{\prior}{\mathtt{Prior}}
\newcommand{\until}{U} 
\newcommand{\powerset}[1]{{\cal P}(#1)}
\newcommand{\Prop}{\mathit{Prop}}
\def\Prt[#1]{Pr^{(#1)}}
\newcommand{\kset}{\mathcal{K}}
\newcommand{\spr}{\mathtt{spr}}
\newcommand{\clk}{\mathtt{clk}}
\newcommand{\ve}[1]{{\mathbf #1}}
\newcommand{\beq}{\begin{equation}}
\newcommand{\eeq}{\end{equation}}
\newcommand{\be}{\begin{enumerate}}
\newcommand{\ee}{\end{enumerate}}
\newtheorem{theorem}{Theorem} 
\newtheorem{lemma}[theorem]{Lemma} 
\newtheorem{propn}[theorem]{Proposition}
\newcommand{\pr}{\operatorname{pr}}
\newcommand{\PAS}{Q}
\newcommand{\PAalph}{\Sigma} 
\newcommand{\PAinit}{\mathbf{v}_0} 
\newcommand{\PAtrans}{A} 
\newcommand{\PAfinal}{F}
\newcommand{\PFA}{{\cal A}}
\newcommand{\timeassgt}{\tau} 
\newenvironment{proof}{\noindent {\bf Proof:} }{\hfill $\Box$\\} 
\begin{document}

\title{Undecidable Cases of Model Checking  Probabilistic Temporal-Epistemic Logic (Extended Abstract)%
\thanks{Version of Sep 28, 2015. This version corrects an error in the 
TARK 2015 pre-proceedings version, in the definition of mixed-time polynomial atomic probability formulas.} 
} 

\author{
Ron van der Meyden
\institute{
School of Computer Science and Engineering\\
UNSW Australia}
\email{meyden@cse.unsw.edu.au}
\and 
Manas K Patra
\institute{
School of Computer Science and Engineering\\
UNSW Australia}
\email{manas.patra@gmail.com}
}
%\date{}

\def\titlerunning{Undecidable Cases of Model Checking  Probabilistic Temporal-Epistemic Logic}
\def\authorrunning{R. van der Meyden \& M. K. Patra}

\maketitle

\begin{abstract} 
We investigate the decidability of model-checking logics of time, knowledge and probability, with respect to two epistemic semantics:
the clock and synchronous perfect recall semantics in partially observed discrete-time Markov chains. Decidability results are known for certain restricted logics with respect to these semantics, 
subject to a variety of restrictions that are either unexplained or  involve a longstanding unsolved mathematical problem. 
We show that mild generalizations of  the known decidable cases suffice to render the model checking
problem definitively undecidable. In particular, for a synchronous perfect recall, a generalization from temporal operators with finite reach 
to operators with infinite reach renders model checking undecidable. The case of the clock semantics is 
closely related to a monadic second order logic of time and probability that is known to be  decidable, except on a set of measure zero.
We show that two distinct extensions of this logic make model checking undecidable. One of these involves polynomial combinations of 
probability terms, the other involves monadic second order quantification into the scope of probability operators. These results explain
some of the restrictions in previous work. 
\end{abstract}

\section{Introduction} 

{\em Model checking} is a verification methodology used in computer science, 
in which we  ask whether a given model satisfies a given formula of some logic.  
First proposed in the 1980's \cite{CE81}, 
model checking is now a rich area, with a large body of associated theory and 
well developed implementations that automate the task of model checking. 
Significant use of model checking tools is made in industry, in particular, 
in the verification of computer hardware designs. 

Model checking developed originally in a setting where the 
specifications are expressed in a propositional temporal logic, and
the systems to be verified are  finite state automata. This setting has
the advantage of being decidable, and a great deal of work has gone into
the development of algorithms and heuristics for its efficient implementation. 
More recently,  the field has explored the extent to which the expressiveness 
of both the model representations and of the specification language can be 
extended while retaining decidability of model checking. 
Extensions in the systems dimensions considered include real-time systems \cite{AlurCD90},  systems
with a mixed continuous and discrete dynamic \cite{MalerNP08}, richer automaton models
such as push-down automata, machines with first-in-first out queues etc. 
In the dimension of the specification language, extensions considered 
include elements of second order logic and specific constructs
to capture the richer properties of the systems models described above
(e.g. in the real time case the specification language might
contain inequalities over time values.)  

Model checking for epistemic logic was first mooted in 
\cite{HV91},  and model checking for the combination 
of temporal and epistemic logic has been developed
both theoretically  \cite{MeydenShilov,EGM07,HM10} and in practice \cite{mck,MCMAS,Verics,DEMO}.  
A variety of semantics for knowledge are known to be associated with 
decidable model checking problems in finite state systems, in particular, the 
observational semantics (in which an agent reasons based on its present observation) 
the clock semantics (in which an agent reasons based on its present observation and the present clock value), 
and synchronous and asynchronous versions of perfect recall, all admit
decidable model checking in combination with quite rich temporal expressiveness \cite{MeydenShilov,EGM07,HM10}.

Orthogonally, a line of work on probabilistic model checking
has considered model checking of assertions about
probability and time \cite{prismbook}.  Although one might at first
expect this line of work to be closely related to epistemic
model  checking, in that probability theory provides 
a model of uncertainty, in fact this area has been concerned not with 
how subjective probabilities change over time, but  with a probabilistic extension of temporal logic. 
The focus tends to be on the prior probability of some temporal property,
or on the probability that some temporal property holds in
runs from a current {\em known} state. 

Rather less attention has been given to model checking the combination 
of subjective probability and temporal expressiveness. 
Of the semantics for knowledge mentioned above, the 
clock and synchronous perfect recall semantics
are most suited as a basis for model checking subjective probability. 
(The others suffer from asynchrony, which 
makes it more difficult  to associate  a single natural 
probability space.) Implementations for these semantics
presently exist only for a limited set of formulas, in which the 
full power of temporal logic is not used. For example, results in \cite{HLM11} for model checking the logic of 
subjective probability (with clock or synchronous perfect recall semantics) 
and time restricts the temporal 
operators to have only finite reach into the future, 
and does not handle operators such as ``at all times in the future". 

A fundamental reason underlying this is that the problem of model checking probability with a rich temporal 
expressiveness  seems to be inherently complex. 
Indeed, it requires a solution to a basic mathematical problem, the {\em Skolem Problem}  
for linear recurrences, that has stood unsolved since first posed in the 1930's \cite{Skolem34}. 
Consequently, the strongest results on model checking
probability and time that encompass the expressiveness required 
for model checking knowledge and subjective probability  state
decidability in a way that requires exclusion of an infinite set of 
difficult instances for which decidability is unresolved. 
Specifically, \cite{BRS06} shows that a (weak) monadic second order logic $\WMLOP$, containing
probability assertions of forms such as $\prob(\phi(t_1, \ldots, t_n)) > c$, in which the
$t_i$ take values in the natural numbers, representing discrete time points, 
is decidable in finite state Markov chains, provided that the rational number $c$ is not in 
a set $H_\phi$ depending on $\phi$ which can taken to be of arbitrarily small non-zero measure. This work leaves open the decidability of the model checking 
problem for the language in its full generality, in particular, for the  values of $c$ in $H_\phi$.  

Our contribution  in this paper is to consider a number of generalizations of $\WMLOP$, motivated by model checking a logic of time and subjective probability. 
In particular, our generalizations arise very naturally when attempting to deal with the way that an agent 
conditions probability on its observations.  We show that these generalizations definitively result in 
undecidable model checking problems.  This clarifies the boundary between the decidable and undecidable cases of 
model checking  logics of probability and time. 

We begin in section~\ref{sec:lang} by recalling the definition of {\em probabilistic interpreted systems} \cite{HalpernUncertainty}, 
which provides a very general semantic framework for logics of time, knowledge and probability. We work with an instantiation of
this general framework in which systems are generated from finite state partially observed discrete-time Markov chains. 
We define two logics that take semantics in this framework. 
The first is an extension of the branching time temporal logic $\CTL^*$
to include operators for knowledge and probability, including operators for the subjective probability of agents. 
The second is a more expressive monadic second order  logic that  also adds a capability to quantify over moments of time
and \emph{finite sets} of moments of time. In this logic, the agent knowledge and probability operators
are indexed by a temporal variable. This logic generalizes the logic of \cite{BRS06}. 
Our logics allow polynomial comparisons of probability terms, as well as comparisons of agent probability terms referring to multiple time points. 
We argue from a number of motivating applications that this level of expressiveness is
useful in potential applications.  
We show in Section~\ref{sec:relations} that the monadic second order logic is as least as expressive as our probabilistic extension of $\CTL^*$. 
Indeed, some apparently mild extensions of $\WMLOP$ suffice for the encoding:  
the epistemic  and subjective probability operators can be eliminated using a universal modality,  
polynomial combinations of probability expressions, and a more liberal use of 
quantification than allowed in $\WMLOP$. 

We then turn in Section~\ref{sec:results} to an investigation of the model checking problem. 
Specifically, we show that model checking even very simple formulas about a single agent's probability 
is undecidable when the agent has perfect recall. A consequence of this result is that 
an extension of $\WMLOP$  that adds second order quantification into the scope 
of probability is undecidable. 

This suggests a focus on weaker epistemic semantics instead, in particular, the 
clock semantics. From the point of view of $\WMLOP$, to express
agent's subjective probabilities with respect to the clock semantics requires polynomial combinations of 
simple global probability terms of the form `` the probability that proposition $p$ holds at time $t$". 
We formulate a simple class of formulas involving such  polynomial combinations, 
and show that this  also has undecidable model checking.  

These results show that even simple model checking questions about 
subjective probability are undecidable, and moreover help to 
explain some unexplained restrictions on $\WMLOP$ in \cite{BRS06}: 
these restrictions are in fact necessary in order to obtain a decidable logic.  
We conclude with a discussion of future work
in Section~\ref{sec:concl}. Related work most closely related to our results is discussed in the 
context of presenting and motivating the results.

\section{Probabilistic Knowledge}\label{sec:lang} 

We describe in this section the semantic  setting 
for the model checking problem we consider. 
We model a set of agents making partial observations of an environment 
that evolves with time. We first present the semantics of the modal logic we consider, following \cite{HalpernUncertainty}, 
using the general notion of probabilistic interpreted system. 
Since these structures are not finite, in order to have a finite input for a model checking problem, 
we derive a probabilistic interpreted system from a partially observed discrete-time Markov chain. 
This is done in two ways, depending on the degree of recall of the agents. Taking the 
Markov chain to be finite, we obtain finitely presented model checking problems whose complexity we
then study. 

\subsection{Probabilistic Interpreted Systems} 

Probabilistic interpreted systems are defined as follows. 
Let  $Agt=\{1,\ldots,n\}$ be a set of agents operating in an environment $e$. 
At each moment of  time, each  agent is assumed to be in some {\it local} state, which records all the information that the agent can access at that time. 
The environment $e$ records ``everything else that is relevant". 
Let $S$ be the set of environment states and let $L_i$ be the set of local states of agent $i\in Agt$. 
A {\it global} state of a multi-agent system is an $(n+1)$-tuple $s=(s_e,s_1,\ldots,s_n)$ such that $s_e\in S$ and $s_i\in L_i$ for all $i\in Agt$.
We write $\G = S \times L_1 \times \ldots \times L_n$ for the set of global states. 

Time is represented discretely using the natural numbers $\mathbb{N}$. 
A {\em run}  is a function $r:\mathbb{N}\rightarrow \G$, specifying a 
global state at each moment of time. A pair $(r,m)$ consisting of a run $r$ and time $m\in \Nat$ is called a {\em point}. If $r(m)=(s_e,s_1,\ldots,s_n)$ then 
we define $r_e(m)=s_e$ and $r_i(m)=s_i$ for $i\in Agt$.  
If $r$ is a run and $m\in \Nat$ a time, we write $r[0..m]$ for $r(0) \ldots r(m)$ and $r_e[0..m]$ for $r_e(0) \ldots r_e(m)$. 
A {\em system} is a set $\R$ of runs. 
We call $\R\times \Nat$  the {\em set of points} of the system $\R$. 

Agent knowledge is captured using a relation of indistinguishability. 
Two points $(r,m)$ and $(r',m')$ are said to be \emph{indistinguishable to agent $i$}, 
if the agent is in the same local state at these points. 
Formally, we define $\sim_i$ to be the equivalence relation on $\R\times \Nat$ given by $(r,m) \sim_i (r',m')$, if $r_i(m) = r'_i(m')$. 
Relative to a system $\R$, we define 
the set  $$\kset_i(r,m)=\{(r',m')\in\R\times \mathbb{N}~|~(r',m')\sim_i(r,m)\}$$ to be the set of points that are,  for agent $i$,  indistinguishable from  the 
point $(r,m)$. 
Intuitively, $\kset_i(r,m)$ is the set of all points that the agent considers possible when it is in the actual situation $(r,m)$. 
A system is said to be \emph{synchronous} if for all agents $i$, we have that $(r',m') \in \kset_i(r,m)$ implies that $m=m'$. 
Intuitively, in a synchronous system, agents always know the time. Since it is more
difficult to define probabilistic  knowledge in systems that are not synchronous, we confine our attention to synchronous systems
in what follows. 

A {\em probability space} is a triple $\Pspace=(W,\F,\mu)$ such that $W$ is a (nonempty) set, 
called the {\em carrier}, $\F\subseteq \mathcal{P}(W)$ is a $\sigma$-field of subsets of $W$, called the {\em measurable} sets in $\Pspace$,
containing $W$ and 
closed under complementation and countable union, 
and $\mu:\F\rightarrow [0,1]$ is a {\em probability measure},
such that $\mu(W)=1$ and $\mu(\bigcup_n V_n)=\sum_n \mu(V_n)$ for every countable sequence $\{V_n\}$ of mutually disjoint measurable sets $V_n \in \F$.  
As usual, we define the conditional probability $\mu(U|V) = \mu(U\cap V)/\mu(V)$ when $\mu(V) >0$. 

Let $\Prop$ be a set of \emph{atomic propositions}. 
A {\em probabilistic interpreted system} over $\Prop$ is a tuple $\I=(\R, \prob_1, \ldots , \prob_n, \pi)$ 
such that $\R$ is a system, each $\prob_i$ is a function mapping each point $(r,m)$ of 
$\R$ to a probability space  $\prob_i(r,m)$ in which the carrier is a subset of $\R\times \Nat$, 
and $\pi:\R\times \mathbb{N}\rightarrow \powerset{\Prop}$ is an interpretation of some set $\Prop$ of 
atomic propositions. Intuitively, the probability space $\prob_i(r,m)$ captures the way that the agent $i$ assigns
probabilities at the point $(r,m)$, and $\pi(r,m)$ is the set of atomic propositions that are true at the point. 

We will work with probabilistic interpreted systems derived from 
synchronous systems in which agents have a common prior on the set of runs.
To define these, we use the following notation. 
For a system $\R$, a set of runs $\cS \subseteq \R$ and a set of points $U\subseteq \R\times \Nat$, define 
$$\cS(U)=\{r\in\cS~|~\exists m:(r,m)\in U\}$$ to be the set of runs in $\cS$ 
passing through
 some point in
the set $U$.  Conversely, for a set $\cS$ of runs and a set $U$ of points, 
define $$U(\cS) = \{(r,m)\in U~|~r\in \cS\}$$ to be the set of points in $U$ 
that are on a run in $\cS$. Note that if there exists a constant $k\in \Nat$ such that $(r,m) \in U$ implies $m=k$, 
then the relation $r \leftrightarrow (r,k)$ defines a  one-to-one correspondence between $\cS(U)$ and $U(\cS)$. 
In synchronous systems, in which the sets $\kset_i(r,m)$ satisfy this condition, 
this gives a way to move between sets of 
points considered possible by an agent and corresponding sets of runs.

Suppose that $\R$ is a synchronous system,  let $\Pspace = (\R,\F,\mu)$ be a probability space on the system $\R$, 
and let $\pi$ be an interpretation on $\R$.  Intuitively, the probability space $\Pspace$ represents a 
prior distribution over the runs. We assume that for all points $(r,m)\in \R\times \Nat$ and agents $i$, 
we have that $\R(\kset_i(r,m))\in \F$ is a measurable set and $\mu(\R(\kset_i(r,m))) >0$.   
(This assumption can be understood as saying that, according to the prior, each 
possible local state $r_i(m)$ of agent $i$ at time $m$ has non-zero probability of being the local 
state of agent $i$ at time  $m$.)  
Under this condition, we define the probabilistic interpreted system 
$\I(\R,\Pspace,\pi) = (\R,\prob_1,\ldots,\prob_n,\pi)$ such that $\prob_i$ associates with each point $(r,m)$ the probability space
 $\prob_i(r,m)=(\kset_i(r,m),\F_{r,m,i},\mu_{r,m,i})$
 defined by $$\F_{r,m,i} = \{\kset_i(r,m)(\cS)~|~\cS \in \F\}$$ and  
 such that $$\mu_{r,m,i}(U)=\mu(\, \R(U)~|~\R(\kset_i(r,m))\, )$$ for all $U \in \F_{r,m,i}$. 
Intuitively, because the set of runs $\R(\kset_i(r,m))$ is measurable, we can obtain a probability space with carrier
$\R(\kset_i(r,m))$ by conditioning in $\Pspace$. Because of the synchrony assumption there is, 
for each point $(r,m)$, a one-to-one correspondence 
between points in $\kset_i(r,m)$ and runs in $\R(\kset_i(r,m))$. The construction 
uses this correspondence to induce a probability space on $\kset_i(r,m)$ from the probability space on 
$\R(\kset_i(r,m))$.
We remark that under the additional assumption of perfect recall, it is 
also possible to understand each space $\prob_i(r,m+1)$ as 
obtained by conditioning on the space $\prob_i(r,m)$. 
See \cite{HalpernUncertainty} for a detailed explanation of this point.

\subsection{Probabilistic Temporal Epistemic Logic} 

To specify properties of probabilistic interpreted systems, a variety of logics can be formulated, 
drawing from the spectrum of temporal logics.  Our main interest is in a reasoning about subjective probability and
time, so we first consider a natural way to combine existing temporal and probabilistic 
logics. For purposes of comparison, it is also helpful to consider a rather richer monadic second 
order logic of probability and time, that is closely related to a logic for which some decidability results are known. 

We may combine temporal and probabilistic logics to define a logic \PLTLsK that extends the temporal logic $\mathbf{CTL^*}$ 
by adding operators for knowledge and probability.
Its syntax is given by the grammar
\[
\begin{array}{l} 
\phi~::=~p~|~\neg\phi~|~\phi\wedge \phi~|~A\phi~|~X\phi~|~\phi \until \phi~|~K_i\phi~|~
f(P, \ldots,P) \bowtie c\\[8pt] 
P ::= \prob_i(\phi) ~|~\prior_i(\phi) 
\end{array} 
\]
where $p\in \Prop$, 
$c$ is a rational constant,  
$\bowtie$ is a relation symbol in the set $\{\leq,<,=,>,\geq\}$, 
and $f(x_1, \ldots, x_k)$ is multivariate polynomial in $k$ variables 
$x_1 , \ldots x_k$ with rational coefficients. 
Instances of $P$ are called \emph{basic probability expression}. 
The instances generated from  $f(P, \ldots,P)$ are 
called {\em probability expressions}, and are expressions of the form 
 $f(P_1, \ldots , P_k)$, obtained by substituting a basic probability expression $P_i$ for each variable $x_i$ in $f(x_1, \ldots, x_k)$.  
For example, 
$$4  \prob_1(p)^5\cdot \prob_2(q)^3 + \frac{7}{15}\prob_1(p) $$ 
is an instance of $f(P, \ldots,P)$ 
obtained from 
$f(x,y) = 4x^5y^3 + \frac{7}{15}x$ by substituting $\prob_1(p)$ for $x$ and $\prob_2(q)$ for $y$.

 Intuitively, formula $K_i\phi$ expresses that agent $i$ knows $\phi$. 
The formula $A \phi$ says that $\phi$ holds for all possible system evolutions from the current situation. 
The formula $X\phi$ expresses that $\phi$ holds at the next moment of time. 
The formula $\phi_1 \until \phi_2$ says that $\phi_2$ eventually holds, and  $\phi_1$ holds until that time. The expression 
 $\prob_i(\phi) $ represents agent $i$'s current probability of $\phi$, 
  $\prior_i(\phi) $ represents agent $i$'s {\em prior} probability of $\phi$, i.e., the agent's probability of 
  $\phi$ at time 0. 
 The formula $f(P_1, \ldots,P_k) \bowtie c$ expresses that this polynomial combination of 
 current and prior probabilities stands in the relation $\bowtie$ to $c$. 
 We use standard abbreviations from temporal logic, in particular, 
 we write $F\phi$ for $\mathit{true} \until \phi$.
 
 A restricted fragment of the language that may be of interest 
 is the {\em branching time fragment}  in which the temporal operators 
 are restricted to those of  the temporal logic $\mathbf{CTL}$. 
 That is,  $X$ and $\until$ are permitted to 
 occur only in combination with the operator $A$, in one of the forms 
 $AX\phi$, $EX\phi$,  $A\phi_1 \until \phi_2$, $E\phi_1\until\phi_2$, 
 where we write $E\phi$ as an abbreviation for $\neg A \neg \phi$. 
 We call this fragment of the language \PCTLK. 
 The motivation for considering this fragment is that 
 the complexity of model checking   is in polynomial time
 for  the temporal logic  $\CTL$, whereas it is 
 polynomial-space complete for the richer temporal logic $\CTL^*$ \cite{CES86}. 
The logic \PCTLK is therefore, {\em prima facie}, a candidate 
for lower complexity once knowledge and probability operators
are added to the logic.

 The semantics of the language \PLTLsK in a probabilistic interpreted system 
 $\I = \I(\R, \Pspace, \pi)$ is given by interpreting formulas $\phi$ at points $(r,m)$ of $\I$, 
 using  a satisfaction relation $\I,(r,m)\models \phi$. 
The definition is mutually recursive with a function $[\cdot]_{\I,(r,m)}$
that assigns a value $[P]_{\I,(r,m)}$  to each probability expression $P$ at each point $(r,m)$. 
This requires computing the measure of certain sets. 
For the moment, we assume that all sets arising in the definition are measurable. 
We show later that this assumption holds in the cases of interest in this paper. 

 We first interpret the probability expressions at points $(r,m)$ of the system $\I$, by 
 
\begin{subequations}
\begin{equation*}\label{eq:ProbAssignment_agent}
[\prob_i\phi]_{\I,(r,m)}   =  \mu_{r,m,i}(\{(r',m')\in \kset_i(r,m)~|~\I,(r',m')\models\phi\})
\end{equation*}
\begin{equation*}
 ~[\prior_i\phi]_{\I,(r,m)}   =   \mu_{r,0,i}(\{(r',0)\in \kset_i(r,0)~|~ \I,(r',0)\models\phi\})
\end{equation*}
\begin{equation*}
 ~[f(P_1, \ldots,P_k)]_{\I,(r,m)}   =   f([P_1]_{\I,(r,m)} , \ldots,[P_k]_{\I,(r,m)} )
\end{equation*}

\end{subequations}

 The satisfaction relation is then defined recursively, as follows: 
\begin{enumerate}
\item $\I, (r,m)\models p$ if $p\in\pi(r,m)$
\item $\I, (r,m)\models \neg\phi$ iff not $\I,(r,m)\models \phi$
\item $\I, (r,m)\models \phi_1\wedge\phi_2$ iff $\I, (r,m)\models \phi_1$ and $\I, (r,m)\models \phi_2$
\item $\I, (r,m)\models A\phi$ if $\I,(r',m)\models \phi$ for all runs $r'$ with $r'[0\ldots m] = r[0\ldots m]$, 
\item $\I, (r,m)\models X\phi$ if $\I,(r,m+1)\models \phi$
\item $\I, (r,m)\models \phi_1 \until \phi_2$ holds if there exists $k\geq m$ such that 
$\I,(r,k)\models \phi_2$, and $\I,(r,l)\models \phi_1$ for all $l$ with $m\leq l<k$. 
\item $\I, (r,m)\models K_i\phi$ if $\I, (r',m')\models \phi$ for all $(r',m')\in \kset_i(r,m)$.
\item $\I, (r,m)\models f(P_1, ...,P_k) \bowtie c$ if $[f(P_1, ...,P_k)]_{\I,(r,m)}  \bowtie c$. 
\end{enumerate}

\subsection{Probabilistic Monadic Second Order Logic} 

\newcommand{\botag}{\bot} 
\newcommand{\topag}{\top} 

Temporal modal logics refer to time in a somewhat implicit way. An alternative
approach is to work in a setting with more explicit references to time, 
by using variables denoting time points.  Kamp's theorem \cite{kamp} establishes an equivalence in the first order case, but
by adding second order variables and quantification, one
can obtain richer logics, that frequently remain decidable in the 
monadic case.  In this section, we develop a logic in this style for
time and subjective probability. 

We define the logic $\WMLOKP$ as follows. 
We use two types of variables: time variables $t$ and set variables $X$. 
Time variables take values in $\Nat$ and set variables take \emph{finite} subsets of $\Nat $ as values. 
{\em Probability terms}  $P$ have  the form  $\prob(\phi)$ or the form 
$ \prob_{i,t}(\phi)$ where $i \in \Agt$ is an agent, $t$ is a time variable, $\phi$ is a formula. 
Formulas $\phi$ are defined by the following grammar: 
$$ \phi ::= \begin{array}[t]{l} 
p(t) ~ |~X(t) ~|~ t_1 < t_2~|~f(P, \ldots, P) \bowtie c   ~|~ \neg \phi~|~\phi \land \phi~|~ \\
K_{i,t}( \phi) ~|~\forall t(\phi) ~|~ \forall X(\phi)
\end{array} 
$$ 
where $t, t_1, t_2$ are time variables, $p$ is an atomic  proposition, $X$ is a set variable,  $i$ is an agent, $c \in \Rat$ is  a rational constant, 
$f$  is a rational polynomial (see the discussion above for $\PLTLsK$), 
and $\bowtie$ is a relation symbol from the set $\{ =, < ,\leq, > ,\geq\}$. 

Intuitively, in this logic formulas are interpreted relative to a run. 
Instead of indexing by a single moment of time, as in the logic above, we 
relativize the satisfaction relation to an assignment of values to the temporal and set variables. 
Atomic formula  $p(t)$ says that proposition  $p$ holds at time $t$.
Similarly, a (finite) set $X$ of times can be interpreted as a proposition, and
we can understand  $X(t)$ as stating that the value of $t$ is in $X$. 
(We remark that there is a fundamental difference between the types of 
propositions denoted by  atomic propositions $p$ and set variables $X$: whereas the 
atomic propositions may depend on structural aspects of the run, such as
the global state at time $t$, the set variables may refer only to the time.)  
The atomic formula $t_1 < t_2$ has the obvious interpretation that time $t_1$ is less than time $t_2$. 
The constructs
$\forall t(\phi)$  and $\forall X(\phi)$ correspond to universal quantification over
times and \emph{finite}
sets of times respectively. They say that $\phi$ holds on the current run for all values of the 
variable. (Taking finite sets amounts to the \emph{weak} interpretation of second order quantification. 
One could also consider a strong semantics allowing infinite sets of times. We have opted
here for the weak interpretation to more easily relate our results to the existing literature.) 

The probability term $\prob(\phi)$ refers to the probability of $\phi$ in the probability space on runs. 
The meaning of probability term $ \prob_{i,t}(\phi)$ is agent $i$'s probability at time $t$ that the run satisfies $\phi$. 
Similarly, $K_{i,t}\phi$ says that agent $i$ knows at time $t$ that the run satisfies $\phi$. 
Note that, whereas in $\PLTLsK$, the formula $K_i \phi$ always expresses that agent $i$ knows
that $\phi$ holds at the ``current time", in  $\WMLOKP$, formulas such as 
$$ \exists u (u< t \land K_{i,t}( p(u)))$$ 
talk about the agent's knowledge, at some time $t$, about
what was true at some earlier time $u$. A similar point applies to
probability expressions. 

Accordingly, for the semantics of $\WMLOKP$, we use a variant of interpreted
systems in the form $\I = (\R, \Pr, \pi)$, where $\R$ is a system, i.e., a set of runs, 
and $\pi$ is an interpretation, as above, but where 
$\Pr = (\R, \F, \mu)$ is a probability space with carrier equal to the set of runs 
$\R$, rather than a mapping associating a probability space over a set of points
with  each  agent at each point. 

When dealing with formulas with free time and set variables, we need the extra notion of an assignment for the 
time and set variables. This is a function $\timeassgt$ such that for each free time variable $t$ we have $\timeassgt(t) \in \Nat$, 
and for each free set variable $X$ 
we have that $\timeassgt(X)$ is a finite subset of  $\Nat$.
Given such an assignment, we give the semantics of probability terms and 
formulas by a mutual recursion. We give the semantics of formulas $\phi$
by means of a relation $\I, \timeassgt, r \models \phi$ defined as follows: 
\begin{enumerate}
\item $\I, \timeassgt, r\models p(t)$ if $p\in\pi(r,\timeassgt(t))$, when $p $ is an atomic proposition, 
\item $\I, \timeassgt, r\models X(t)$ iff $\timeassgt(t) \in \timeassgt(X)$, if $X$ is a set variable, 
\item  $\I, \timeassgt, r\models t_1<t_2 $ iff $\timeassgt(t_1) < \timeassgt(t_2)$, 
\item $\I, \timeassgt, r\models \neg \phi$ iff not  $\I, \timeassgt, r\models \phi$, 
\item $\I, \timeassgt, r\models \phi_1\wedge\phi_2$ iff $\I, \timeassgt, r\models \phi_1$ and $\I, \timeassgt, r\models \phi_2$,
\item $\I, \timeassgt, r\models K_{i,t}(\phi)$ if $\I, \timeassgt, r'\models \phi$ for all $(r',m')\in \kset_i(r,\timeassgt(t))$, 

\item $\I, \timeassgt, r\models f(P_1, ...,P_k) \bowtie c$ if $[f(P_1, ...,P_k)]_{\I,\timeassgt,r} \bowtie c$, 
\item $\I, \timeassgt, r\models \forall t (\phi)$ if $\I, \timeassgt[t\mapsto n], r\models \phi$ for all $n \in \Nat$,  
\item $\I, \timeassgt, r\models \forall X (\phi)$ if $\I, \timeassgt[X\mapsto U], r\models \phi$ for all finite $U \subseteq \Nat$.  
\end{enumerate}
In item (7), 
the definition is mutually recursive with the semantics of 
 probability  terms, which are interpreted as real numbers, relative to a temporal assignment. 
We define 
$$ [\prob(\phi) ]_{\I,\tau,r} = \mu( \{ r' \in \R ~|~  \I,\timeassgt,r' \models \phi\})$$  and 
$$ [\prob_{i,t}(\phi) ]_{\I,\tau,r} = \frac{\mu( \{ r'\in \R ~|~ (r, \timeassgt(t)) \sim_i (r', \timeassgt(t)),~ \I,\timeassgt,r' \models \phi\})}{\mu( \{ r'\in \R ~|~ (r, \timeassgt(t)) \sim_i (r', \timeassgt(t)) \})} $$ 
$$ [f(P_1, \ldots, P_k)]_{\I, \tau,r} = f([P_1]_{\I, \tau,r}, \ldots, [P_k]_{\I, \tau,r}) $$ 
As above, we assume measurability of the sets required, and 
also that the agent probability expressions do not involve a division by zero. We
later justify that this holds in the 
particular setting of interest in this paper.

A particular class of formulas of $\WMLOKP$ will be of interest below. 
Define a {\em mixed-time polynomial atomic probability formula} 
to be a formula of the form% 
\footnote{The TARK 2015 pre-proceedings version of this paper incorrectly had a universal 
quantifier in this definition. The existential form is needed for the correctness of Theorem~\ref{thm:mixedtime}.}
$$ \exists t_1 \ldots t_n ( f(\prob(\phi_1),  \ldots , \prob(\phi_m)) =0 )$$
where $f(x_1, \ldots, x_m)$ is a rational polynomial and each $\phi_i$ is an atomic  formula of the 
form $p(t_j)$ for some proposition $p$ and $j\in \{1\ldots n\}$.  
We motivate the usefulness of such temporal mixing of probability expressions in Section~\ref{sec:discussion}.  

The logic $\WMLOKP$ generalizes several logics from the literature. 
If we restrict the language by excluding the probability comparison atoms 
$f(P_1, \ldots, P_k) \bowtie c $ 
and knowledge formulas  $K_{i,t}( \phi)$, we have the \emph{Weak Monadic Logic of Order}, 
which is equivalent to WS1S \cite{Buchi60}. 
We obtain the \emph{Probabilistic Monadic Logic of Order} considered in \cite{BRS06}, which we denote here by $\WMLOP$, 
if we 
\begin{itemize} 
\item exclude the knowledge operators $K_{i,t}$, 
\item exclude agent's probability terms $\prob_{i,t}(\phi)$, and 
\item limit the global probability  comparisons to be of the form $\prob(\phi(t_1, \ldots, t_k)) \bowtie c $, containing
just a single probability term $\prob(\phi(t_1, \ldots, t_k))$, with the further constraint that the only free variables of $\phi$ should be 
temporal variables $t_1, \ldots t_k$. 
\end{itemize}
In particular, second-order quantification into probability expressions, e.g.,  $\forall X[ \prob(X(t)) >c ]$ is not 
permitted in $\WMLOP$,  but second order quantification that does not cross a probability operator, such as  $ \prob(\forall X[X(t))]) >c$, is allowed.
We note that $\WMLOP$ \emph{does} allow first order quantifications into the scope of probability, 
such as $\forall t[ \prob(p(t)) >c ]$. 

In the sequel, we refer to quantification into the scope of a knowledge formula or probability expression as \emph{quantifying-in}. 

\subsection{Partially Observed Markov Chains} 

Although they provide a coherent semantic framework, probabilistic interpreted systems are infinite structures, and therefore 
not suitable as input for a model checking algorithm. We therefore work with  a type of finite model called an {\em interpreted
partially observed discrete-time Markov chain}, or PO-DTMC for short. A finite PO-DTMC for $n$ agents 
is a tuple  $M=(S,PI,PT,O_1,...,O_n,\pi)$, where $S$ is a finite set of states, $PI:S\rightarrow[0..1]$ is a function such that $\sum_{s\in S}PI(s)=1$, component $PT:S\times S\rightarrow [0,1]$ is a function 
such that $\sum_{s'\in S}PT(s,s')=1$ for all $s\in S$,  and for each agent $i\in Agt$, we have a function $O_i:S\rightarrow \mathcal{O}$ for some set $\mathcal{O}$. 
Finally, $\pi:S\rightarrow \mathcal{P}(\Prop)$ is an interpretation of the atomic propositions $\Prop$  at the states.  

Intuitively, $PI(s)$ is the probability that an execution of the system starts at state $s$, and $PT(s,t)$ is
the probability that the state of the system at the next moment of time will be $t$, given that it is 
currently $s$.  The value $O_i(s)$ is the observation that agent $i$ makes when the system is in state $s$. 
(Below, in the context of interpreted systems, we treat the set of states $S$ as the states of the environment 
rather than as the set of global states. Agents' local  states will be derived from the observations.)  

Note that the first three components $(S,PI,PT)$ of a PO-DTMC form a standard discrete-time Markov chain. 
This gives rise to a probability space on runs in the usual way. 
A \emph{path}  in $M$ is a finite or infinite sequence $\rho = s_0 s_1 \ldots$ 
such that $PI(s_0) \neq 0$ and $PT(s_k, s_{k+1})>0 $ for all $k$ with $0 \leq k <|\rho|-1$. 
We write $\infinitepaths{M}$ for the set of all infinite paths of $M$. 
Any finite path $\rho= s_0 s_1 \ldots s_m$  defines a set 
\beq\label{eq:basicCylinder}
\infinitepaths{M}\pathsover{\rho} = \{ \omega \in \infinitepaths{M}~|~ \omega[0\ldots m] = \rho \}
\eeq 
That is, $\infinitepaths{M}\pathsover{\rho}$ consists of all infinite paths which have $\rho$ as a prefix.

We now define a probability space $\Pspace(M) = (\infinitepaths{M}, \F, \mu )$ 
over the set $\infinitepaths{M}$ of all infinite paths of $M$. The $\sigma$-algebra $\F$ 
is defined to be the smallest $\sigma$-algebra over $\infinitepaths{M}$
that contains as basic sets all the sets $\infinitepaths{M} \pathsover{\rho}$ for  $\rho= s_0 s_1 \ldots s_m$ a finite path of $M$. 
For these basic sets, the function $\mu$ is defined by 
$$\mu(\infinitepaths{M}\pathsover{\rho}) = PI(s_0) \cdot PT(s_0,s_1) \cdot \ldots \cdot PT(s_{m-1},s_m)~.$$
The fact that $\mu$ can be extended to a measure on $\F$ 
is a non-trivial result of Kolmogorov for more general stochastic processes \cite{KemenySnell1}.

We may construct several different probabilistic interpreted systems from each PO-DTMC, 
depending on what agents remember of their observations.  
We consider two, one that assumes that agents have  perfect recall of all their observations, denoted $\spr$, 
and the other, denoted $\clk$, which assumes that agents are aware of the current time and their current observation. 
Recall that runs in an interpreted system map time to global states, 
consisting of a state of the environment and a local state for each agent. We interpret the states of the PO-DTMC $M$ as
states of the environment. To obtain a run, we also need to specify a local state for each agent
at each moment of time. We use the the observations to construct the local states. 

In the case of the {\em synchronous perfect recall semantics},  
given a path $\rho \in \infinitepaths M$, we obtain a run $\rho^{\spr}$ by defining the components at each time $m$ as follows. 
The environment state at time $m$ is $\rho_e^{\spr}(m)=\rho(m)$,  and the local state of agent $i$ at time $m$ 
is $\rho_i^{\spr}(m)=O_i(\rho(0))\ldots O_i(\rho(m))$. Intuitively, 
this local state assignment represents that the agent remembers all its past observations. 
We write $\R^\spr(M)$ for the set of runs of the form $\rho^{\spr}$ for $\rho\in \infinitepaths M$.
Note that this system is synchronous: if $r = \rho^\spr$ and $r' = \omega^\spr$ then for each agent $i$ 
and time $m \in \Nat$, if 
$r_i(m) = r'_i(m')$, then $O_i(\rho(0))\ldots O_i(\rho(m)) = O_i(\omega(0))\ldots O_i(\omega(m'))$, 
which implies $m = m'$. 

For the {\em clock semantics}, we  construct a run a  $\rho^{\clk}$ in which  again the environment state at time $m$ is $\rho_e^{\clk}(m)=\rho(m)$, and 
for agent $i$ we define the local state at time $m$ by $\rho^{\clk}(m) = (m,O_i(\rho(m))$. 
Intuitively, this says that the agent is aware of the clock value and its current observation. 
We write $\R^\clk(M)$ for the set of runs of the form $\rho^{\clk}$ for $\rho\in \infinitepaths M$ an infinite path of $M$. 
This system is also synchronous: if $r = \rho^\clk$ and $r' = \omega^\clk$ then for each agent $i$ 
and time $m \in \Nat$, if $r_i(m) = r'_i(m')$, then $(m, O_i(\rho(m))) = (m', O_i(\omega(m')))$, hence $m=m'$. 
In both cases of $x \in \{\spr,\clk\}$, if $T$ is a subset of $\infinitepaths M$, we 
write $T^x$ for $\{\rho^x~|~\rho\in T\}$. 

In both cases of $x \in \{\spr,\clk\}$, we have a one-to-one correspondence between the infinite paths
$\infinitepaths{M}$ and the runs $\R^x(M)$. We therefore 
can induce probability spaces $\Pspace^x(M)$ on $\R^x(M)$ from the probability space 
$\Pspace(M)$ on $\infinitepaths M$.   
As described above, the probability space $\Pspace^x(M)$ on runs moreover induces
a probability space $\prob^x_i(r,m)$ on the set of points considered possible by
each agent $i$ at each point $(r,m)$. 
 The PO-DTMC $M$ gives us an interpretation $\pi$ on its states, and we may derive from  this 
an interpretation $\pi^x$ on the points $(r,m)$ of  $\R^\spr(M)$ and  $\R^\clk(M)$ by defining $\pi^x(r,m) = \pi(r_e(m))$. 
Using the general construction defined above, we then obtain the probabilistic interpreted systems $\I^x(M) = \I(\R^x(M),\Pspace^x(M), \pi^x)$
for $x \in \{\spr,\clk\}$. 

It is necessary to establish the measurability of the sets corresponding
to formulas for the semantic definitions of the logics above to be complete. 
This is established in the following result. 

\begin{lemma} 
Let $M$ be a finite PO-DTMC and $x \in \{\spr, \clk\}$. 
For every set $S\subseteq \R(M)$ of runs of $M$ such that the semantic definitions above of 
$\PLTLsK$ and $\WMLOKP$ in $\I^x(M)$ refer to $\mu(S)$, the set $S$ is measurable in $\Pspace(M)$. 
\end{lemma}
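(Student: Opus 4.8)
The plan is to prove the statement by a single structural induction on formulas that simultaneously shows every extension set occurring in the semantics is measurable and that every probability value invoked in the recursion is well defined. For a $\PLTLsK$ formula $\phi$ and time $m$ write $S_{\phi,m} = \{r \in \R^x(M) \mid \I^x(M),(r,m)\models\phi\}$, and for a $\WMLOKP$ formula $\phi$ and assignment $\timeassgt$ write $S_{\phi,\timeassgt} = \{r \mid \I^x(M),\timeassgt,r\models\phi\}$. Via the one-to-one correspondence between runs and infinite paths, it suffices to show that each such set lies in $\F$; every set $S$ whose measure is referenced in the definitions of $[\prob_i\phi]$, $[\prior_i\phi]$, $[\prob(\phi)]$ and $[\prob_{i,t}(\phi)]$ is then measurable, being either one of these extension sets, an indistinguishability set $\R(\kset_i(r,m))$, or an intersection of the two.

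First I would record two finiteness facts that carry the whole argument. Since $M$ is finite, for each $m$ there are only finitely many finite paths of length $m+1$, so any event determined by the prefix $r_e[0..m]$ -- in particular each set $\{r \mid r_e(m)=s\}$ and each cylinder $\infinitepaths{M}\pathsover{\rho}$ -- is a finite union of cylinders, hence in $\F$. Consequently each indistinguishability class $\R(\kset_i(r,m))$ is measurable: under $\clk$ it is determined by $O_i(r_e(m))$ and under $\spr$ by the observation history $O_i(r_e(0))\ldots O_i(r_e(m))$, so in either case there are only finitely many distinct classes at time $m$, each a finite union of cylinders.

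The induction then proceeds case by case. The Boolean cases use closure of $\F$ under complement and intersection; $X\phi$ merely shifts the index; and $\phi_1 \until \phi_2$ becomes the countable union $\bigcup_{k\geq m}\big(S_{\phi_2,k}\cap\bigcap_{m\leq l<k}S_{\phi_1,l}\big)$. For $A\phi$ and $K_i\phi$ (and $K_{i,t}\phi$) satisfaction depends only on the prefix, respectively the $\sim_i$-class, at the relevant time, so $S_{A\phi,m}$ is the union of those finitely many length-$(m+1)$ cylinders contained in $S_{\phi,m}$, and $S_{K_i\phi,m}$ is the union of those finitely many classes contained in $S_{\phi,m}$. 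For the probability-comparison atoms the key observation is that $r\mapsto[\prob_i\phi]_{\I,(r,m)}$ and $r\mapsto[\prior_i\phi]_{\I,(r,m)}$ are constant on each $\sim_i$-class and hence finitely valued and measurable -- the inner measures being those of $S_{\phi,m}$ intersected with a measurable class, well defined since the conditioning class has positive measure by hypothesis -- and similarly for $\prob_{i,t}(\phi)$, while $[\prob(\phi)]_{\I,\timeassgt,r}=\mu(S_{\phi,\timeassgt})$ does not depend on $r$ at all. A polynomial $f$ of finitely valued measurable functions is again finitely valued and measurable, so $\{r \mid [f(P_1,\ldots,P_k)]\bowtie c\}$ is a finite union of classes.

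The genuinely load-bearing case, which I expect to matter most, is weak second-order quantification. I would write $S_{\forall t(\phi),\timeassgt}=\bigcap_{n\in\Nat} S_{\phi,\timeassgt[t\mapsto n]}$ and, crucially, $S_{\forall X(\phi),\timeassgt}=\bigcap_{U} S_{\phi,\timeassgt[X\mapsto U]}$ where $U$ ranges over the finite subsets of $\Nat$. Because there are only countably many finite subsets of $\Nat$, this is a countable intersection and so remains in $\F$; it is exactly here that restricting to the weak (finite-set) interpretation is essential, since the analogous intersection over arbitrary subsets would be uncountable and need not be measurable. The remaining delicacy is purely bookkeeping: as formula satisfaction and probability-term evaluation are mutually recursive, I would run the induction over a complexity measure on which the hypothesis applies to the subformulas nested inside $\prob_i(\cdot)$, $\prior_i(\cdot)$, $\prob(\cdot)$ and $\prob_{i,t}(\cdot)$, so that measurability of those inner extension sets is in hand when the enclosing probability atom is treated.
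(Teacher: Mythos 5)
Your proof is correct, and in fact the paper (an extended abstract) states this lemma \emph{without} proof, so there is no published argument to compare against; yours is precisely the standard argument the authors presumably had in mind. The two pillars you identify are the right ones: (i) finiteness of $M$ makes every prefix-determined event, and hence every $\sim_i$-class at a fixed time (under both $\spr$ and $\clk$), a finite union of cylinders, so the extension sets of $A\phi$, $K_i\phi$, $K_{i,t}\phi$ and of probability-comparison atoms are \emph{finite} unions of measurable pieces; and (ii) weak second-order quantification ranges over the countably many finite subsets of $\Nat$, so $\forall X$ contributes only a countable intersection --- exactly the point where the weak interpretation is load-bearing. One small imprecision worth fixing: for an atom $f(P_1,\ldots,P_k)\bowtie c$ whose basic terms involve several agents (or both $\prob_i$ and $\prior_i$), the value of $f$ is constant not on single classes but on the cells of the common refinement of the finitely many class-partitions involved; since that refinement is still a finite partition into measurable sets, the comparison set is a finite union of such cells and the conclusion stands. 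Your handling of the mutual recursion via a complexity measure that treats formulas inside $\prob_i(\cdot)$, $\prior_i(\cdot)$, $\prob(\cdot)$, $\prob_{i,t}(\cdot)$ as subformulas is also the right bookkeeping.
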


\subsection{Discussion} \label{sec:discussion}

We have defined our logics to be quite expressive in the type of atomic 
probability assertions we have allowed, which involve polynomials of
probability expressions. In $\WMLOKP$, these expressions
may explicitly refer to different time points. Some  
existing logics of probability in the literature use a more restricted expressiveness, 
e.g., \cite{FaginHalpern} consider a logic that has only linear combinations
of probability expressions, and many logics \cite{BRS06,prismbook} allow only inequalities 
involving a single probability term. Here give some motivation to show that the richness 
we have allowed is natural and useful for applications. 

{\bf Polynomials:} There are several motivations for allowing polynomial combinations
of probability expressions. One, as noted in \cite{FaginHM90}, 
is that polynomials arise naturally from conditional probability. 
If we would like to include linear combinations of conditional probability expressions in the language, we find that this 
motivates a generalization to polynomial combinations of probability expressions. 
Consider the formula
$\prob(\phi_1|\psi_1) + \prob(\phi_2|\psi_2) \leq c$. Expanding out the definition of conditional probability, 
we have 
$$\frac{\prob(\phi_1\land \psi_1)}{\prob(\psi_1)} + \frac{\prob(\phi_2\land \psi_2)}{\prob(\psi_2)} \leq c~.$$ 
We see here that  there is a risk of division by zero that needs to be managed
in order for the semantics of this formula to be fully defined. 
One way to do so is to multiply out the denominators, resulting in the form 
$$\prob(\phi_1\land \psi_1)\cdot \prob(\psi_2) + \prob(\phi_2\land \psi_2)\cdot \prob(\psi_1) \leq c\cdot \prob(\psi_1)\cdot \prob(\psi_2)~$$ 
which is meaningful in all cases. (Should this not have the desired semantics in case  one of the $\prob(\psi_i)$ is zero, 
an additional formula can be added that handles this special case as desired.)  However, although we 
started with a linear probability expression, we now have multiplicative terms.  
This suggests that the appropriate way to add the expressiveness of conditional probability to the 
language is to admit atomic formulas that compare polynomial combinations of probability expressions. 

More generally, although it is less of relevance for purposes of model checking, 
and more of use for axiomatization of the logic, allowing polynomials also naturally enables 
familiar reasoning patterns  to be captured inside the logic. In particular, validities
such as $\prob(\phi_1 \lor \phi_2) = \prob(\phi_1) + \prob(\phi_2)$ 
when $\phi_1$ and $\phi_2$ are mutually exclusive and 
$\prob(\phi_1 \land \phi_2) = \prob(\phi_1) \cdot \prob(\phi_2)$ 
when $\phi_1$ and $\phi_2$ are independent show that both addition
and multiplication of probability terms arises naturally.

{\bf Mixed-time:} A second way in which our logics are rich  is in 
allowing probability atoms that refer to different moments of time. 
In $\PLTLsK$ this already the case because combinations such as 
$\prior_A(\phi) = \prob_A(\phi)$ are allowed, which refer to both the current 
time and to time $0$. The logic $\WMLOKP$ takes such temporal mixing
further by allowing reference to  time points explicitly named using time variables. 

Such temporal mixing is natural, since there are potential applications that require
this expressiveness. For example, in computer security, one often wants to 
say that the adversary $A$ does not learn anything about a secret from watching 
an exchange between two parties. However, it is often the case that 
the adversary  knows some prior distribution over the secrets. (For example, the 
secret may be a password, and choice of passwords by users are very non-uniform, 
with some passwords like `123456' having a very high probability.)  
This means that the simple assertion that the adversary does not know 
the secret, or that the adversary has a uniform distribution over the secret, 
does not capture the appropriate notion of security. 
Instead, as recognised already by Shannon in his work on secrecy \cite{shannon49}, 
we need to assert that the adversary's distribution over the 
secret has not changed as a result of its observations. 
This requires talking about the adversary's probability at two time points. 
For example, \cite{HLM11} capture an anonymity property  
by means of formulas using terms $\prior_A(\phi) = \prob_A(\phi)$. 

{\bf Mixed-time polynomials:}  Additionally, the logic of probability applied to formulas referring
to different times leads naturally  to polynomial combinations of 
probability terms, each referring to a different moment of time. 
For example, although $\WMLOP$  allows only formulas of the form 
$\prob(\phi(t_1, \ldots, t_n)) \bowtie c$, where the $t_i$ are time variables, the decision algorithm of \cite{BRS06} 
  uses the fact that, when $t_1< t_2 < \ldots < t_n$, the formula $\phi(t_1, \ldots, t_n)$ is equivalent to a formula 
of the form $\phi_1(t_1) \land \phi_2(t_2-t_1) \land \ldots \phi_{n}(t_n-t_{n-1}) \land  \phi_{n+1}(t_n)$, 
where the $\phi_i(t)$ are independent past-time formulas for $i = 1\ldots n$ and $\phi_{n+1}(t)$ is a  
future time formula. (This statement is closely related to Kamp's theorem \cite{kamp}.) This enables  $\prob(\phi(t_1, \ldots, t_n))$ to be expressed as a sum of products of terms of the form
 $\prob(\phi_i(u))$ where $\phi_i(u)$ has just a single free time variable $u$. 
 Thus, although mixed-time probability formulas are not directly expressible in the logic of \cite{BRS06}, 
 specific ones are implicitly expressible, and  the  extension is a mild one. It is worth remarking, 
 however, that the coefficients of the polynomial expansion of $\prob(\phi(t_1, \ldots, t_n))$ 
 are all positive, so we do not quite have arbitrary polynomials here. We return to this point below.

\section{Relating the logics} \label{sec:relations} 

The logic $\WMLOKP$ is very expressive, so it is not surprising that 
it can capture all of $\PLTLsK$. The following result makes this precise.

For the results below, it is convenient to add to the system a special agent $\bot$ that is blind,
and an agent $\top$ that has complete information about the state. 
In the context of PO-DTMC's 
these agents are obtained by taking the observation functions to satisfy 
$O_\bot(s) = O_\bot(t)$ and $O_\top(s) =s $ for all states $s,t$. 
 We write $\Box \phi$  for $K_{\bot, t} \phi$ where $t$ is any time
variable. This gives a \emph{universal modality}:  $\Box\phi$ says that $\phi$ holds on all runs.  
We write $[t\mapsto n]$ for the temporal assignment defined only on temporal variable $t$, and mapping this to $n$.

\begin{propn} \label{prop:trans} 
Let $M$ be a PO-DTMC with agent $\top$ and let $x \in \{\spr, \clk\}$. 
For every formula $\phi$ of  $\PLTLsK$, there exists a formula $\phi^*(t)$ of $\WMLOKP$
with $t$ the only free variable, 
such that $\I^x(M), (r,n) \models \phi$ iff $\I^x(M), [t\mapsto n], r \models \phi^*(t)$  for all runs $r$. 
\end{propn}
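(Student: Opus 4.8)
The plan is to proceed by induction on the structure of the $\PLTLsK$ formula $\phi$, defining the translation $\phi \mapsto \phi^*(t)$ so that the free time variable $t$ plays the role of the ``current time'' $m$ at which $\phi$ is evaluated in $\I^x(M)$. The guiding principle is that the satisfaction relation $\I^x(M),(r,m)\models\phi$ is indexed by a single time point, whereas $\WMLOKP$ relativizes to a run together with an assignment $\timeassgt$; so the translation must thread the ``current time'' through a free variable. I would first treat the base and Boolean cases: set $p^*(t) = p(t)$, $(\neg\phi)^*(t) = \neg(\phi^*(t))$, and $(\phi_1\wedge\phi_2)^*(t) = \phi_1^*(t)\wedge\phi_2^*(t)$, each verified directly against the two semantic clauses.

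For the temporal operators I would use the explicit-time machinery of $\WMLOKP$. The next operator becomes $(X\phi)^*(t) = \exists t'\,(t < t' \wedge \neg\exists t''(t < t'' \wedge t'' < t') \wedge \phi^*(t'))$, encoding $t' = t+1$ via the definable successor predicate, matching clause~(5). For until, $(\phi_1\until\phi_2)^*(t) = \exists k\,(t \leq k \wedge \phi_2^*(k) \wedge \forall l\,((t\leq l \wedge l < k)\rightarrow \phi_1^*(l)))$, mirroring clause~(6) (using $\leq$ and $\rightarrow$ as the standard abbreviations). The path quantifier $A\phi$ is the delicate temporal case, since clause~(4) quantifies over runs $r'$ \emph{agreeing with $r$ on the prefix} $r[0..m]$; here I would exploit the special agent $\top$, whose observation function is the identity, so that $\kset_\top(r,m)$ picks out exactly the points on runs sharing the same environment prefix up to time $m$. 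This lets me render $A\phi$ using $\top$'s knowledge operator, writing $(A\phi)^*(t)$ as $K_{\top,t}(\phi^*(t))$ -- though I must check carefully that $\kset_\top$ under the semantics $x$ (both $\spr$ and $\clk$) captures prefix-agreement as required by clause~(4); this equivalence between $A$ and $\top$-knowledge is the first real obstacle, and it is where the hypothesis that $M$ contains agent $\top$ is used.

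The epistemic and probabilistic cases follow the correspondences built into $\WMLOKP$: set $(K_i\phi)^*(t) = K_{i,t}(\phi^*(t))$, directly matching clause~(7) of $\PLTLsK$ against clause~(6) of $\WMLOKP$ via the one-to-one correspondence between points and runs in a synchronous system. For probability expressions, $(\prob_i\phi)$ translates to $\prob_{i,t}(\phi^*(t))$ and $(\prior_i\phi)$ to $\prob_{i,t'}(\phi^*(t'))$ with $t'$ bound to $0$ (expressible as the $<$-least time), and a polynomial atom $f(P_1,\ldots,P_k)\bowtie c$ translates by replacing each $P_j$ with its translated probability term inside the same $f$ and $\bowtie c$. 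The verification here reduces to checking that the measures defining $[\prob_i\phi]_{\I,(r,m)}$ and $[\prob_{i,t}(\cdot)]_{\I,\timeassgt,r}$ compute the same real number under the point-run correspondence, which follows from the construction of $\prob_i(r,m)$ by conditioning $\Pspace^x(M)$ and from the synchrony of $\I^x(M)$.

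The main obstacle I anticipate is twofold. First, getting the $A$ operator right: I must confirm that $K_{\top,t}$ quantifies over precisely the runs agreeing on the prefix $r[0..m]$, which requires that $\top$'s local state at time $m$ (under perfect recall, the full observation history; under the clock semantics, only the current state plus time) determines the relevant equivalence class correctly -- and in fact only the $\spr$ semantics literally records the whole prefix, so under $\clk$ the translation of $A$ may need a different, explicitly-quantified encoding of prefix-agreement rather than a direct appeal to $\top$. Second, I must maintain the invariant that $\phi^*$ has $t$ as its \emph{only} free variable throughout the induction, renaming bound variables as needed so that the nested translations do not capture $t$; this bookkeeping, together with the measurability guarantee from the Lemma ensuring every measure invoked is well-defined, completes the argument.
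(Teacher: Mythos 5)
Your proposal follows essentially the same route as the paper: the same structural induction, the same clause-by-clause translation ($X$ via the definable successor, $\until$ via $\exists u\geq t$ and $\forall v$, $K_i\phi \mapsto K_{i,t}(\phi^*(t))$, $\prob_i \mapsto \prob_{i,t}$, $\prior_i \mapsto \prob_{i,0}$ with $0$ definable, polynomial atoms translated componentwise), and the same use of agent $\top$ to render $A\phi$ as $K_{\top,t}(\phi^*(t))$ in the perfect-recall case. The verification obligations you list (point--run correspondence via synchrony, measurability, freshness of bound variables) are the right ones.

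The one genuine gap is exactly the point you flag but do not close: the translation of $A$ under the clock semantics. You correctly observe that under $\clk$ the set $\kset_\top(r,m)$ fixes only the time and the current state, not the prefix $r[0..m]$, so $K_{\top,t}(\phi^*(t))$ does not literally match clause (4); but you leave the repair as ``some explicitly-quantified encoding of prefix-agreement,'' which is a promissory note, not an argument. The paper's repair is more specific and lighter than what you suggest. The key observation is that, under the clock semantics, the truth of a translated formula at $(r',t)$ can depend on the strict past of $r'$ only through $\prior_i$ subformulas, i.e., through time $0$: temporal operators look only forward, and $K_{i,t}$, $\prob_{i,t}$ under $\clk$ depend only on the time and current observation; moreover any run agreeing with $r$ at the current state can be re-prefixed with $r[0..t]$ and remains a path of $M$. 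Hence it suffices to pin down the \emph{initial} state rather than the whole prefix: introduce (without loss of generality) fresh propositions $p_s$ for each state $s$, true exactly at $s$, and set $(A\phi)^*(t) = \bigwedge_{s\in S}\bigl(p_s(0) \rimp K_{\top,t}(p_s(0) \rimp \phi^*(t))\bigr)$. Your alternative --- encoding full prefix agreement with second-order variables, in the style of Proposition~\ref{prop:qi} --- could presumably be made to work, since $\WMLOKP$ permits quantifying-in, but it is heavier and you would still owe the correctness argument; as written, the clock case of $A$ (and with it the interaction with $\prior_i$ that makes it fail naively) is identified but not resolved.
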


\begin{proof} 
The translation is defined by the following recursion: 
\[
\begin{split}
& p^*(t)  =   p(t)  \\
 & (\neg \phi)^*(t) = \neg \phi^*(t)  \\
 &  (\phi_1 \land \phi_2)^*(t) =  \phi^*_1(t) \land \phi^*_2(t) \\
&(X\phi)^*(t)  =  \exists u(u=t+1 \land \phi^*(u)) \\ 
& (K_i \phi)^*(t)  = K_{i,t}( \phi^*(t)),\\  
&(\phi_1 \until \phi_2)^*(t) =  \exists u\geq t( \phi^*_2(u) \land \forall v( t \leq v < u \rimp \phi_1^*(v)))\\
&(\prob_i(\phi))^*(t)  =  \prob_{i,t}(\phi^*(t)) \\
& (\prior_i(\phi))^*(t)  =   \prob_{i,0}(\phi^*(0)) \\ 
&(f(P_1, \ldots, P_k) \bowtie c)^*(t)  =  f(P^*_1(t), \ldots, P^*_k(t)) \bowtie c \\ 
\end{split} 
\]
Note that 
$u=v$ is definable as $\neg( u<v \lor v<u)$, that 
$u = t+1$ is definable as $ u >t  \land \forall v> t ~( u \leq v)$, and that $u = 0$ is definable as
$ \neg \exists  t( u = t+1)$. 
We can use $(A\phi)^*(t)  =   K_{\top, t} (\phi^*(t))$ to translate $A\phi$ in the perfect recall case. 
In case of the clock semantics, this translation loses the information about the initial state, 
which is required for correctness of the translation of $\prior_i(\phi)$. In this case, we
introduce, without loss of generality, new propositions $p_s$ for each state $s$, 
such that $p_s \in \pi_e(t) $ iff $s=t$, and take 
$$(A\phi)^*(t)  =   \bigwedge_{s\in S } (p_s(0) \rimp K_{\top, t} (p_s(0) \rimp \phi^*(t)))~.$$
\end{proof}

\newcommand{\obs}{\mathit{obs}}

With respect to the specific systems we derive from PO-DTMC's with respect to the
clock and perfect recall semantics, we are able to make some further statements 
that simplify the logic $\WMLOKP$ by eliminating some of the operators. 
These results are useful for the undecidability results that follow. 

For the following results, we note that, without loss of generality,
we may assume that a finite PO-DTMC comes equipped
with atomic propositions that encode the observations made by the agents. 
Specifically, when agent $i$ has possible observations 
$O_i(S) = \{o_{i,1} , \ldots , o_{i,k_i}\}$, 
we assume that there are atomic propositions 
$\obs_{i,j}$ for $i \in \Agt$  and $j= 1 \ldots k_i$ such that 
for all states $s$, we have $\obs_{i,j}\in \pi(s)$ iff $O_i(s) = o_{i,j}$.
Thus, $\obs_{i,j}(t)$ holds in a run just when agent $i$ makes observation $o_{i,j}$ at time $t$.

\begin{propn} \label{prop:elimkp-clock}
With respect to $\I^\clk(M)$ for a  finite PO-DTMC $M$, the  operators $K_{i,t}$ and $\prob_{i,t}$ 
can be eliminated using the universal operator $\Box$ and polynomial comparisons of universal probability terms $\prob(\psi)$, respectively. 
For simple probability formulas $\prob_{i,t}(\phi) \bowtie c$, only linear probability comparisons are required. 
\end{propn}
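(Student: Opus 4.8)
The plan is to exploit the especially simple form of the indistinguishability relation under the clock semantics. Recall that for $r = \rho^\clk$ and $r' = \omega^\clk$ we have $(r,m) \sim_i (r',m)$ iff $O_i(\rho(m)) = O_i(\omega(m))$; that is, agent $i$'s information at time $m$ is captured entirely by its current observation, which is recorded by exactly one of the propositions $\obs_{i,j}$. I would proceed by a bottom-up recursion on the nesting of the operators $K_{i,t}$ and $\prob_{i,t}$, so that when translating an occurrence applied to a formula $\phi$, I may assume $\phi$ has already been rewritten into a formula $\phi'$ free of agent knowledge and agent probability operators.

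For the knowledge operator I would use the translation
$$ K_{i,t}(\phi) ~\equiv~ \bigwedge_{j=1}^{k_i} \big( \obs_{i,j}(t) \rimp \Box(\obs_{i,j}(t) \rimp \phi') \big). $$
Here $\timeassgt(t)=m$ is fixed, and in the current run $r$ exactly one proposition $\obs_{i,j_0}(t)$ holds, namely the one recording $O_i(r_e(m))$. For $j \neq j_0$ the guard is false and the conjunct is vacuous, while for $j=j_0$ the conjunct $\Box(\obs_{i,j_0}(t) \rimp \phi')$ asserts precisely that $\phi'$ holds on every run agreeing with $r$ on agent $i$'s observation at time $m$, i.e. on exactly the set $\R(\kset_i(r,m))$. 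Correctness is then immediate from the semantics of $\Box = K_{\bot,t}$, which by synchrony ranges over all runs at the fixed time.

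The heart of the argument is the treatment of $\prob_{i,t}$. Expanding the semantic definition under the clock semantics, the conditioning event in $[\prob_{i,t}(\phi)]_{\I,\timeassgt,r}$ is the set of runs making the same observation at time $t$ as $r$; hence, when $\obs_{i,j_0}(t)$ holds in $r$,
$$ [\prob_{i,t}(\phi)]_{\I,\timeassgt,r} ~=~ \frac{[\prob(\obs_{i,j_0}(t) \land \phi')]_{\I,\timeassgt,r}}{[\prob(\obs_{i,j_0}(t))]_{\I,\timeassgt,r}}, $$
the denominator being positive by the standing assumption that each realized local state has non-zero prior probability. To eliminate the operator inside a comparison $f(P_1,\ldots,P_k)\bowtie c$, I would case-split on the tuple $\vec{\jmath}$ of observations made by the relevant agents at the relevant times, guarding each case by the conjunction $\Theta_{\vec{\jmath}}$ of the corresponding $\obs$ propositions. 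Under $\Theta_{\vec{\jmath}}$ every agent term becomes a definite ratio of universal terms $\prob(\psi)$, and clearing the positive denominators turns the comparison into a polynomial comparison $G_{\vec{\jmath}}$ of universal probability terms; the full translation is then $\bigwedge_{\vec{\jmath}} (\Theta_{\vec{\jmath}} \rimp G_{\vec{\jmath}})$.

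The main obstacle I expect is the bookkeeping around clearing denominators: one must multiply each comparison by the product of denominators raised to a sufficiently high power to clear all fractions simultaneously, and justify that the relation $\bowtie$ is preserved because the multiplier is strictly positive on the guarded case. In the simple case $\prob_{i,t}(\phi)\bowtie c$ the polynomial is a single variable and the denominator occurs to the first power, so clearing it yields the linear comparison $\prob(\obs_{i,j}(t)\land\phi') \bowtie c\cdot\prob(\obs_{i,j}(t))$, establishing the final sentence of the statement; in general the product of several distinct ratios forces genuine polynomial comparisons.
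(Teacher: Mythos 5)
Your proposal is correct and follows essentially the same route as the paper: the same observation-guarded translation $\bigwedge_j(\obs_{i,j}(t) \rimp \Box(\obs_{i,j}(t)\rimp\phi))$ for knowledge, and the same denominator-clearing translation $\prob(\obs_{i,j}(t)\land\phi)\bowtie c\cdot\prob(\obs_{i,j}(t))$ for the simple probability case, with general comparisons handled by case-splitting on observations and clearing the positive denominators to obtain polynomial comparisons. Your explicit bottom-up recursion on operator nesting and the tuple-guarded treatment of general polynomial comparisons merely spell out details the paper leaves implicit.
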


\begin{proof} 
The formula $$ \bigwedge_{j = i\ldots k_i} ( \obs_{i,j}(t) \rimp \Box( \obs_{i,j}(t)  \rimp \phi)$$ 
is easily seen to be equivalent to $K_{i,t}(\phi)$ in $\I^\clk(M)$.  
Similarly, $\prob_{i,t}(\phi) \bowtie c$
can be expressed as 
$$ \bigwedge_{j = i\ldots k_i} ( \obs_{i,j}(t) \rimp \prob (  \obs_{i,j}(t)  \land \phi ) \bowtie c \cdot \prob ( \obs_{i,j}(t) )) ~.$$
A similar transformation applies for more general agent probability comparisons, but we note that 
linear comparisons may transform to polynomial comparisons: similarly to the discussion of conditional probability in Section~\ref{sec:discussion}.    
\end{proof}

\begin{propn} \label{prop:qi} 
With respect to $\I^\spr(M)$ for a finite PO-DTMC $M$,
the  probability formulas $\prob_{i,t}(\phi) \bowtie c$ can be reduced to linear comparisons using only terms $\prob(\psi)$, provided 
second-order quantifying-in is permitted. Knowledge terms $K_{i,t}$ can be reduced to the universal modality $\Box$, 
provided   second-order quantifying-in is permitted for this modality. 
\end{propn}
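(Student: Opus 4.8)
The plan is to exploit the fact that, under the perfect-recall semantics $\spr$, two runs $r,r'$ satisfy $(r,m)\sim_i(r',m)$ exactly when agent $i$ makes the same sequence of observations along both runs up to time $m$, i.e. $O_i(r_e(u))=O_i(r'_e(u))$ for every $u\le m$. Using the observation propositions $\obs_{i,j}$, this indistinguishability class is completely determined by recording, for each possible observation $o_{i,j}$, the finite set of times $u\le m$ at which it occurs. Since the system is synchronous this history has length $m+1$ and is therefore finite, which is precisely what makes \emph{weak} (finite-set) second-order quantification sufficient to name it.

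Concretely, writing $\overline X$ for a tuple $X_1,\ldots,X_{k_i}$ of set variables, I would introduce the formula
$$ \mathrm{hist}_t(\overline X) \;\equiv\; \forall u\Big(u\le t \rimp \bigwedge_{j=1}^{k_i}\big(X_j(u)\dimp \obs_{i,j}(u)\big)\Big),$$
where $u\le t$ and $u=t$ are definable as in the proof of Proposition~\ref{prop:trans}. Evaluated on a run $r'$ under an assignment $\timeassgt$, this states that the sets $\overline X$ record exactly the observation history of $r'$ up to time $\timeassgt(t)$. The key observation is that for a fixed run $r$ and value of $t$ there is, up to behaviour beyond $t$ which $\mathrm{hist}_t$ ignores, a unique witness $\overline X$ with $\I^\spr(M),\timeassgt,r\models \mathrm{hist}_t(\overline X)$, namely $r$'s own history, and that the set $\{r'\in\R ~|~ \I^\spr(M),\timeassgt,r'\models \mathrm{hist}_t(\overline X)\}$ for this witness is exactly $\R(\kset_i(r,\timeassgt(t)))$. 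Since set variables are bound by the assignment rather than by the run, the same witness can be threaded into the scope of $\Box$ or of a $\prob$ term — this is exactly the quantifying-in that the statement permits.

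Given this, I would define, applied recursively with innermost operators first, the translations
$$ K_{i,t}(\phi) \;\equiv\; \exists \overline X\big(\mathrm{hist}_t(\overline X)\land \Box(\mathrm{hist}_t(\overline X)\rimp \phi)\big) $$
and, for the probability comparison,
$$ \prob_{i,t}(\phi)\bowtie c \;\equiv\; \exists \overline X\big(\mathrm{hist}_t(\overline X)\land \prob(\mathrm{hist}_t(\overline X)\land \phi)\bowtie c\cdot \prob(\mathrm{hist}_t(\overline X))\big).$$
The outer $\exists\overline X$ together with the first conjunct pins $\overline X$ to $r$'s history; the $\Box$ in the first formula then ranges over exactly the runs indistinguishable from $r$ at $t$, yielding $K_{i,t}(\phi)$, and in the second formula the numerator $\prob(\mathrm{hist}_t(\overline X)\land\phi)$ and denominator $\prob(\mathrm{hist}_t(\overline X))$ reproduce the conditional probability defining $\prob_{i,t}(\phi)$. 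Rearranged as $\prob(\mathrm{hist}_t(\overline X)\land\phi)-c\cdot\prob(\mathrm{hist}_t(\overline X))\bowtie 0$, this is a \emph{linear} comparison of global probability terms $\prob(\psi)$, as required; clearing the denominator is legitimate because the standing assumption $\mu(\R(\kset_i(r,m)))>0$ guarantees $\prob(\mathrm{hist}_t(\overline X))>0$ for the chosen witness.

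I expect the main obstacle to be the correctness argument for the existential witness rather than the construction itself: one must check that every witness $\overline X$ satisfying the first conjunct induces the same value inside $\Box$ and inside the $\prob$ terms, so that the choice among witnesses differing beyond $t$ is immaterial; that $\overline X$ is genuinely a \emph{finite} set and hence a legal value for weak second-order quantification; and that the relevant sets are measurable, the last point being covered by the measurability lemma. It is worth contrasting this with the clock case of Proposition~\ref{prop:elimkp-clock}, where the indistinguishability class depends only on the current observation and can be enumerated by a finite conjunction over the $o_{i,j}$; here the unbounded length of the recalled history is exactly what forces the use of quantifying-in.
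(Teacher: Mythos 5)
Your construction is essentially the paper's own proof: your $\mathrm{hist}_t(\overline X)$ is exactly the paper's formula $\kappa_i(X_1,\ldots,X_{k_i},t)$, and your translations of $K_{i,t}(\phi)$ and $\prob_{i,t}(\phi)\bowtie c$ coincide with the paper's up to your use of $\exists \overline X(\cdot \land \cdot)$ where the paper uses $\forall \overline X(\cdot \rimp \cdot)$ --- an immaterial difference, since a witness tuple always exists and every witness determines the same set of runs, namely $\R(\kset_i(r,\timeassgt(t)))$. Your supplementary checks (witness-independence, finiteness of the sets for weak second-order quantification, measurability, and positivity of the denominator when clearing it) are correct and simply make explicit what the paper leaves implicit.
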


\begin{proof} 
Define $\kappa_i(X_1, \ldots, X_{k_i}, t)$ to be the formula  $$\forall t' \leq t ( \bigwedge_{j= 1\ldots {k_i}} X_i(t') \dimp obs_{i,j}(t'))$$
Intuitively, this says that, up to time $t$, the second order variables $X_1, \ldots, X_k$ encode
the pattern of occurrence of observations of agent $i$ up to time $t$. 
The formula 
 $$ \forall X_1, \ldots X_{k_i}( \kappa_i(X_1, \ldots, X_{k_i}, t)  \rimp \Box( \kappa_i(X_1, \ldots, X_{k_i}, t)  \rimp \phi)$$ 
is easily seen to be equivalent to $K_{i,t}(\phi)$ in $\I^\clk(M)$.  
Similarly, $\prob_{i,t}(\phi) \bowtie c$
can be expressed as 
\[ 
\begin{split}
\forall X_1, \ldots X_{k_i}
(~~~  & \kappa_i(X_1, \ldots, X_{k_i}, t) \rimp \\
&   \prob (  \kappa_i(X_1, \ldots, X_{k_i}, t) \land \phi )  \bowtie c \cdot \prob ( \kappa_i(X_1, \ldots, X_{k_i}, t) ))~. 
\end{split}
\]
\end{proof} 

One might wonder whether the knowledge operators can be eliminated entirely using probability, 
treating $K_i \phi$ as $\prob_i(\phi) = 1$. This is indeed the case for formulas $\phi$ in $\PCTLK$. 
The essential reason is that because formulas of $\PCTLK$ depend at a point $(r,m)$ only on the run prefix $r[0 \ldots m]$, so 
the possibility that $\neg \phi$ holds on a non-empty set of measure zero does not occur. 

\begin{propn} 
For all $\PCTLK$ formulas $\phi$ and PO-DTMC's $M$ and $x\in \{\clk, \spr\}$ we have $\I^x(M) \models K_i \phi \dimp \prob_i(\phi) = 1$. 
\end{propn}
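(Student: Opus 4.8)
The plan is to prove the biconditional pointwise, at an arbitrary point $(r,m)$ of $\I^x(M)$, treating the two implications separately. The implication $K_i\phi \rimp \prob_i(\phi)=1$ is immediate and in fact holds for every formula, not only those in \PCTLK: if $\I^x(M),(r,m)\models K_i\phi$, then by the semantics of $K_i$ every point of $\kset_i(r,m)$ satisfies $\phi$, so the set $\{(r',m')\in\kset_i(r,m)\mid \I^x(M),(r',m')\models\phi\}$ is the entire carrier $\kset_i(r,m)$ of $\prob_i(r,m)$, whose $\mu_{r,m,i}$-measure is $1$. All the content therefore lies in the converse $\prob_i(\phi)=1 \rimp K_i\phi$, where one must rule out the pathology that $\neg\phi$ holds on a non-empty set of measure zero.

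The heart of the argument is a lemma, proved by induction on the structure of $\phi\in\PCTLK$, stating that the truth of $\I^x(M),(r,m)\models\phi$ depends only on the finite prefix $r[0..m]$ (equivalently on $\rho[0..m]$ where $r=\rho^x$, since in both the $\spr$ and $\clk$ semantics the local states at times $0,\ldots,m$ are determined by $\rho[0..m]$, so the two semantics are handled uniformly). The observation that makes the temporal cases work is that $A\psi$, and hence its dual $E\psi$, is prefix-determined for an \emph{arbitrary} $\psi$: whether $\forall r'(r'[0..m]=r[0..m]\rimp \I,(r',m)\models\psi)$ holds depends only on the set of runs sharing the prefix $r[0..m]$ and on the truth of $\psi$ — a property of whole runs — on that set, hence only on $r[0..m]$. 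This disposes of $AX\phi$, $EX\phi$, $A(\phi_1\until\phi_2)$, $E(\phi_1\until\phi_2)$ at a stroke: the unbounded future-dependence of $\until$ is universally or existentially closed over all continuations of the prefix. The remaining cases use the inductive hypothesis routinely: atomic $p$ depends on $r_e(m)$; Booleans are immediate; $K_i\phi$ depends on $r_i(m)$ and on the truth of $\phi$ at the synchronous points $(r',m)$ indistinguishable from $(r,m)$, which is prefix-determined by induction; and the probability atoms $f(P_1,\ldots,P_k)\bowtie c$ are prefix-determined because each $\kset_i(r,m)$ (resp.\ $\kset_i(r,0)$) and measure $\mu_{r,m,i}$ is fixed by $r_i(m)$ (resp.\ $r_i(0)$) while the integrand's truth values are prefix-determined by induction.

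With the lemma in hand I would argue by contraposition. Suppose $\I^x(M),(r,m)\not\models K_i\phi$; by synchrony there is a point $(r',m)\in\kset_i(r,m)$ with $\I^x(M),(r',m)\models\neg\phi$, and we write $r'=(\rho')^x$. The lemma gives that every run $r''=(\rho'')^x$ with $\rho''[0..m]=\rho'[0..m]$ also satisfies $\neg\phi$ at time $m$, and moreover $r''_i(m)=r'_i(m)=r_i(m)$, so $(r'',m)\in\kset_i(r,m)$. These runs are exactly the cylinder $\infinitepaths{M}\pathsover{\rho'[0..m]}$, whose measure is $PI(\rho'(0))\cdot PT(\rho'(0),\rho'(1))\cdots PT(\rho'(m-1),\rho'(m))$; since $\rho'$ is a genuine path of $M$, every factor is strictly positive, so this measure is positive. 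Hence the set $\{(r'',m)\in\kset_i(r,m)\mid \I,(r'',m)\models\neg\phi\}$ contains a family of runs of positive $\mu$-measure lying inside $\R(\kset_i(r,m))$, so conditioning yields $\mu_{r,m,i}$ of this set $>0$, i.e.\ $[\prob_i\phi]_{\I,(r,m)}<1$ and $\prob_i(\phi)=1$ fails. Measurability of every set invoked is supplied by the measurability lemma stated above.

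The main obstacle is the prefix-determinacy lemma, and within it the temporal cases: the point to get right is that it is the outer $A$/$E$ guard — present in \emph{every} temporal subformula by the definition of the \PCTLK fragment — that collapses the otherwise unbounded future-dependence of $\until$ into dependence on the prefix alone. This is exactly where the branching-time restriction is essential: for full $\PLTLsK$ with bare $\until$ the lemma fails, a non-empty measure-zero set of runs can satisfy $\neg\phi$, and the equivalence $K_i\phi\dimp\prob_i(\phi)=1$ breaks down, so the restriction to \PCTLK is not cosmetic but genuinely drives the result.
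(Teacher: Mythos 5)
Your proposal is correct and follows essentially the same route as the paper: the paper offers no detailed proof of this proposition, only the one-line justification that \PCTLK formulas depend at a point $(r,m)$ only on the prefix $r[0..m]$, so that $\neg\phi$ cannot hold on a non-empty set of measure zero, and your prefix-determinacy lemma plus the positive-measure-cylinder argument is exactly a careful elaboration of that idea. Your observation that the $A$/$E$ guards make the temporal cases of the induction immediate, and that the equivalence fails for unguarded $\until$, matches the paper's own counterexample following the proposition.
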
    

However, this is not the case for formulas $K_i \phi$ where $\phi$ is an LTL formula. Consider the following Markov Chain. 
\begin{figure}[h] 
\centerline{\includegraphics[height=2cm]{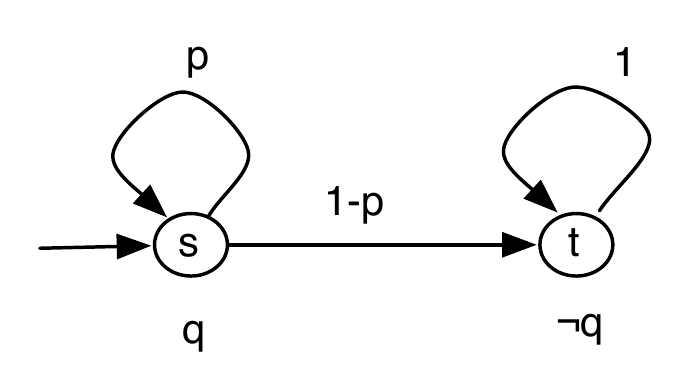}}
\end{figure} 
Here we have, at the initial state $s$, that $\neg K_i(F \neg q)$, because the agent considers it possible that 
always $q$ (this holds for all choices of observation functions). However, we have $\prob_i(F \neg q) = 1$, since the only run where $\neg q$ does not
eventually hold is the run that always remains at $s$. This run has probability zero.

\section{Undecidability Results} \label{sec:results} 

We can now state the main results of the paper concerning the problem of model checking formulas of (fragments of) the logics $\PLTLsK$
and $\WMLOKP$ in a PO-DTMC $M$, with respect to an epistemic  semantics $x\in \{\spr, \clk\}$. 
Using the results of Section~\ref{sec:relations}, we also obtain conclusions about extensions of $\WMLOP$ that 
do not refer to agent probability and knowledge. 

For a formula $\phi$  of $\PLTLsK$, we write $M\models^x \phi$, if $\I^x(M),(r,0)\models \phi$ for all runs $r\in \R^x(M)$.
In the case of $\WMLOKP$, we consider sentences, i.e., formulas without free variables, 
and  write $M\models^x \phi$, if $\I^x(M),\timeassgt, r\models \phi$ for all runs $r\in \R^x(M)$ and the empty 
assignment $\timeassgt$. 
The model checking problem is to determine, given a PO-DTMC $M$, a formula $\phi$,  and semantics $x \in \{\clk, \spr\}$, whether 
$M\models^x \phi$. 

\subsection{Background} 

For comparison with results below, it is worth stating a result from \cite{BRS06} concerning 
decidability of the fragment $\WMLOP$ of $\WMLOKP$ that omits knowledge operators $K_{i,t}$ and 
agent probability terms $\prob_{i,t}(\phi)$, restricts probability comparisons to the form $\prob(\phi) \bowtie c$, 
and prohibits second order quantification to cross into probability terms. 
Since the structure of agent's local states is irrelevant in this case, we write simply $\I(M)$ for the 
probabilistic interpreted system corresponding to a PO-DTMC $M$. 
To state the result, we define the \emph{parameterized} variant of a formula $\phi$ of 
$\WMLOP$ to be the formula $\phi_{x_1, \ldots, x_k}$, in which each subformula of the form $\prob(\psi) \bowtie c$
is replaced by a formula $\prob(\psi) \bowtie x_i$, with $x_i$ a fresh variable. 
We call the resulting formulas the \emph{parameterized formulas of $\WMLOP$.}  
For some $\alpha\in \Rat^k$, we can then recover the original formula $\phi$ as the instance $\phi_{\alpha}$
obtained from the parameterized variant $\phi_{x_1, \ldots, x_k}$  of $\phi$ by substituting $\alpha_i$ for  $x_i$ for 
each $i = 1\ldots k$.

\begin{theorem}[\cite{BRS06}] \label{thm:brs} 
For each parameterized sentence $\phi_{x_1, \ldots, x_k}$  of   $\WMLOP$,
one can compute for all $\epsilon >0$ a representation of a set $H_\phi\subset \Real^k$ of measure at most $\epsilon$, such that 
the problem of determining if $\I(M) \models \phi_\alpha$ is decidable for $\alpha \in \Rat \setminus H_\phi$.
\end{theorem}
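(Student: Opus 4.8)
The plan is to reduce model checking of a $\WMLOP$ sentence to the classical automata-theoretic decision procedure for weak monadic second-order logic over $(\Nat,<)$ (equivalently WS1S, via \cite{Buchi60}), with the sole nonclassical ingredient being the probabilistic comparison atoms $\prob(\phi) \bowtie c$. Since such an atom contains only free time variables, it denotes a subset of $\Nat^k$, namely the set of tuples $(t_1,\ldots,t_k)$ at which the comparison holds; the whole problem becomes decidable once each such subset is shown to be effectively regular (semilinear, hence MSO-definable). First I would analyze a single atom $\prob(\phi(t_1,\ldots,t_n)) \bowtie c$. Splitting $\Nat^n$ into the finitely many regions fixed by a strict ordering $t_{\sigma(1)} < \cdots < t_{\sigma(n)}$ of the variables, I would invoke the decomposition noted in Section~\ref{sec:discussion}, a consequence of Kamp's theorem \cite{kamp}: on each region $\phi$ is equivalent to a conjunction of independent past- and future-segment formulas over the gaps $u_j = t_{\sigma(j+1)} - t_{\sigma(j)}$, so that $\prob(\phi)$ expands as a sum of products of single-variable probability terms $\prob(\psi_j(u_j))$.

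Second, each single-variable term $\prob(\psi(u))$ is, by the matrix-power structure of the underlying Markov chain $M$, a linear recurrence sequence in $u$: an exponential-polynomial $\sum_\ell p_\ell(u)\lambda_\ell^u$ whose eigenvalues $\lambda_\ell$ and polynomial coefficients $p_\ell$ are computable from $M$. Consequently, on each ordering region, $\prob(\phi)$ is an exponential-polynomial in the gaps whose asymptotic and accumulation values as the gaps grow are in principle determined by the spectral data. The truth value of the comparison to $c$ is then governed by how $c$ relates to these limit/accumulation values, together with the finitely many exact crossings occurring at small gap values.

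Third, and this is where the measure-$\epsilon$ exclusion enters, I would identify the bad set $H_\phi$. Outside a neighborhood of its accumulation values, the truth value of a comparison $\prob(\psi(u)) \bowtie c$ is eventually constant in $u$ while the finitely many initial terms are directly computable, so the atom's truth set is semilinear; but the set of accumulation values of the exponential-polynomials arising from $\phi$ may be countable with a closure we cannot compute exactly, precisely because deciding $\prob(\psi(u)) = c$, or the eventual sign of $\prob(\psi(u)) - c$, is the Skolem/Positivity obstruction for linear recurrences. Accordingly, for a given $\epsilon$ I would compute an open over-approximation $H_\phi \subset \Real^k$ of measure at most $\epsilon$ covering all these accumulation values. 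For $\alpha \in \Rat^k \setminus H_\phi$, each threshold $c = \alpha_i$ is separated from every accumulation value by a computable margin, which forces $\prob(\psi(u)) - c$ to change sign only boundedly often; the truth set of the atom over $\Nat^n$ is then eventually periodic, hence MSO-definable and computable.

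Finally, substituting these effective MSO definitions for the probabilistic atoms yields a pure weak-MSO sentence over $(\Nat,<)$, decidable by the classical procedure \cite{Buchi60}. The main obstacle is the third step: controlling the accumulation structure of the linear-recurrence values well enough to compute the small exclusion set and to recover effective eventual periodicity of the atom's truth set off it. This is exactly the point at which an unconditional decision would require solving the Skolem problem, and the measure-$\epsilon$ device is what permits the result to be stated as decidability off a computable set of arbitrarily small measure rather than as decidability everywhere.
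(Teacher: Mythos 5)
Your sketch follows essentially the same route as the proof the paper attributes to \cite{BRS06} (the paper itself only cites this theorem, giving the idea in one sentence plus the discussion in Section~\ref{sec:discussion}): the Kamp-style decomposition of $\phi(t_1,\ldots,t_n)$ into independent single-gap segment formulas, the expansion of $\prob(\phi(t_1, \ldots, t_n))$ as a sum of products of single-variable terms whose values are linear-recurrence sequences coming from powers of the transition matrix, exclusion of a computable measure-$\epsilon$ covering of the limit/accumulation values (exactly the Skolem-type obstruction), and reduction of the remaining, eventually periodic atoms to weak MSO over $(\Nat,<)$ decided via \cite{Buchi60}. One terminological slip worth fixing: the truth sets you obtain are \emph{recognizable} (eventually periodic in each gap), not ``semilinear'' --- semilinear relations such as $t_2 = 2 t_1$ are not definable in weak MSO over $(\Nat,<)$ --- but your argument in fact establishes recognizability, which is what the substitution step needs.
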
 

Intuitively, the complement of $H_\phi$ contains the points that are bounded away from 
limit points of the Markov chain, and comparisons can be decided using 
convergence properties.

The reason for excluding the set $H_\phi$ is that the limit point cases seem to 
require a resolution of problems related to the \emph{Skolem problem} concerning zeros of linear recurrences \cite{Skolem34}. 
A sequence of real numbers $\{u_n\}$ is called a linear recurrence sequence (LRS) of order $k$ if 
there exist $a_1, \ldots a_k$ with $a_k \neq 0$ such that  
for all $m \geq 1$, 
\[ u_{k+m} = a_1u_{k+m-1} + a_2u_{k+m-2}+\dotsb+ a_ku_m~.\]
We consider the following decision problems associated with a LRS $\{u_n\}$. 
\be
\item
{\bf Skolem problem.} Does there exist $n$ such that $u_n =0$?
\item
{\bf Positivity problem.} Is it the case that for all $n$, $u_n \geq 0$?
\item 
{\bf Ultimate positivity problem.} Does a positive integer $N$ exist such that for all $n \geq N$, $u_n \geq 0$?
\ee
We will deal with sequences with rational entries. By clearing denominators the rational version of the above problems can be shown to be polynomially equivalent to 
similar problems stated using sequences with integer entries. 
There has been a significant amount of work on these problems \cite{Igorbook}, but they have stood unresolved since the 1930's. 
To date, only low order versions of these problems have been shown to be decidable \cite{Halava05,OW14,TMS84}. 

The above problems have an equivalent matrix formulation. A proof of the following can be found in \cite{Halava05}.
\begin{lemma}
For a sequence $u_0, u_1,\dotsc,$ the following are equivalent. 
\be 
\item
$\{u_n\}$ is a rational LRS.  
\item
For $n\geq 1$, $u_n = (A^n)_{1k}$ for a square matrix $A$ with rational entries. 
\item
For $n\geq 1$, $u_n= \ve{v}^TA^n\ve{w}$ where $A$ is a square matrix, and $\ve{v}$ and $\ve{w}$ are vectors with entries from $\{0,1\}$.
\ee 
\end{lemma}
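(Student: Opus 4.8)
The plan is to prove the cycle of implications $(1) \Rightarrow (2) \Rightarrow (3) \Rightarrow (1)$, since $(1) \Rightarrow (2)$ is the only step that requires a genuine construction, while the other two reduce to companion-matrix bookkeeping and the Cayley--Hamilton theorem. Throughout I would take the matrix $A$ in (2) and (3) to have rational entries, as (2) states explicitly and as is forced if $\{u_n\}$ is to be a \emph{rational} LRS (otherwise $A = (\sqrt{2})$ already gives a counterexample).

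First I would treat $(1) \Rightarrow (2)$, the heart of the argument. Suppose $u_{d+m} = a_1 u_{d+m-1} + \cdots + a_d u_m$ for all $m \geq 1$, with the $a_i \in \Rat$. I would pass to the state vector $\ve{x}_n = (u_n, u_{n+1}, \ldots, u_{n+d-1})^{T}$ and observe that $\ve{x}_{n+1} = C\ve{x}_n$ for the $d \times d$ companion matrix $C$ of the recurrence, so that $\ve{x}_n = C^{n-1}\ve{x}_1$. This already yields $u_n = \ve{e}_1^{T} C^{n-1}\ve{x}_1$, but the initial-condition vector $\ve{x}_1 = (u_1, \ldots, u_d)^{T}$ is an arbitrary rational vector rather than a basis vector, and the power is $n-1$ rather than $n$. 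To repair both defects at once I would enlarge $C$ by one coordinate, setting
\[
A = \begin{pmatrix} C & \ve{x}_1 \\ \ve{0}^{T} & 0 \end{pmatrix},
\]
a $(d+1) \times (d+1)$ rational matrix. A short induction then shows that, for every $n \geq 1$, the vector $A^{n}\ve{e}_{d+1}$ has first $d$ entries equal to $\ve{x}_n$ and last entry $0$: the first application of $A$ ``loads'' the column $\ve{x}_1$ out of the last coordinate, and each subsequent application advances the state by $C$. Reading off the top coordinate gives $u_n = \ve{e}_1^{T} A^{n}\ve{e}_{d+1} = (A^{n})_{1,d+1}$ for all $n \geq 1$, which is exactly (2) with matrix size $k = d+1$.

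The implication $(2) \Rightarrow (3)$ is then immediate: if $u_n = (A^{n})_{1k}$ for a $k \times k$ matrix $A$, then $u_n = \ve{e}_1^{T} A^{n}\ve{e}_k$, and $\ve{e}_1, \ve{e}_k$ have entries in $\{0,1\}$. For $(3) \Rightarrow (1)$ I would invoke Cayley--Hamilton: writing the characteristic polynomial of the $k \times k$ rational matrix $A$ as $\chi_A(x) = x^{k} - c_1 x^{k-1} - \cdots - c_k$ with $c_i \in \Rat$, one has $A^{k+m} = c_1 A^{k+m-1} + \cdots + c_k A^{m}$ for every $m \geq 0$; multiplying on the left by $\ve{v}^{T}$ and on the right by $\ve{w}$ gives $u_{k+m} = c_1 u_{k+m-1} + \cdots + c_k u_m$, a rational linear recurrence for $\{u_n\}$.

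The step needing a real idea is $(1) \Rightarrow (2)$: realizing an LRS with \emph{arbitrary} initial data as a single corner entry of a matrix power, where the crux is the augmentation that converts the initial-state vector $\ve{x}_1$ into a basis vector. The one subtle bookkeeping point lies in $(3) \Rightarrow (1)$: the recurrence produced by the characteristic polynomial need not have a nonzero last coefficient when $A$ is singular (i.e.\ when $0$ is an eigenvalue), so to match the convention requiring $a_k \neq 0$ one should pass to the minimal polynomial or factor out the nilpotent part. For the reductions in which the lemma is used this convention is inessential, and I would flag it only briefly.
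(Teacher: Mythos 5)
The paper itself gives no proof of this lemma; it is dispatched with a citation to \cite{Halava05}, and your argument is precisely the standard one from that line of work: a companion matrix plus a one-coordinate augmentation for $(1)\Rightarrow(2)$, the trivial reading $u_n = \ve{e}_1^T A^n \ve{e}_k$ for $(2)\Rightarrow(3)$, and Cayley--Hamilton for $(3)\Rightarrow(1)$. Those three steps are correct as you present them: the induction showing $A^n\ve{e}_{d+1} = (\ve{x}_n^T,0)^T$ for $n \geq 1$ works (the recurrence is available exactly at the indices where you invoke it), and multiplying the Cayley--Hamilton identity by $A^m$ and sandwiching between $\ve{v}^T$ and $\ve{w}$ gives a rational recurrence valid for all $m \geq 1$, given your (necessary, and correctly flagged) reading that $A$ in item (3) is rational.

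The one genuine flaw is your final remark: passing to the minimal polynomial, or factoring out the nilpotent part, does \emph{not} repair the $a_k \neq 0$ convention. If $0$ is an eigenvalue of $A$ of multiplicity $j$ and $q$ is the factor of the characteristic polynomial with $q(0)\neq 0$, then $A^m q(A) = 0$ is guaranteed only for $m \geq j$, so the $q$-recurrence holds only from index $j$ on, while the paper's definition demands it for all $m \geq 1$. In fact no choice of polynomial can help: take $A = \left(\begin{smallmatrix} 0 & 1 \\ 0 & 0 \end{smallmatrix}\right)$, $\ve{v}=\ve{e}_1$, $\ve{w}=\ve{e}_2$, so $u_1 = 1$ and $u_n = 0$ for $n \geq 2$. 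Any recurrence of order $k$ with $a_k \neq 0$ evaluated at $m=1$ reads $u_{k+1} = a_1 u_k + \cdots + a_k u_1$, i.e.\ $0 = a_k$, a contradiction; so this sequence satisfies (3) but is not an LRS under the paper's literal definition. The honest resolution is not a cleverer polynomial but a weakening of the convention: either drop the requirement $a_k \neq 0$ in the definition, or require the recurrence only for all sufficiently large $m$. You are right that this is immaterial for the reductions in which the lemma is used (the Skolem, positivity and ultimate positivity instances are insensitive to it), but your proposal asserts that a fix exists within the stated convention, and that assertion is false.
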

In the usual formulation of the Skolem, positivity and ultimate positivity problems,  
the associated matrices $A$ may contain  negative numbers, and numbers not in $[0,1]$,
so are not stochastic matrices.  
However, \cite{AAOW15} show that these problems can be reduced to a decision problem stated with 
respect to stochastic matrices:  
\begin{lemma} \label{red-A-B} 
Given an integer $k\times k$ matrix $A$, one can compute a $k'\times k'$ stochastic matrix $B$, 
a length $k'$ stochastic vector $\ve{v}$, a length $k'$ vector $\ve{w}= (0, \ldots, 0,1)$ and a constant 
$c$ such that $(A^n)_{1,k} = 0$ ($(A^n)_{1,k} >0$) iff $\ve{v}^T B^n \ve{w} = c$ (respectively, $\ve{v}^T B^n \ve{w} >c$).  
\end{lemma}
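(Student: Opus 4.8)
The plan is to turn the integer matrix $A$ into a stochastic matrix in three stages: first eliminate negative entries while keeping $(A^n)_{1,k}$ recoverable, then rescale and pad to restore stochasticity, and finally arrange the readout as a single reachability probability compared against a rational threshold. For the first stage I would use a sign-doubling construction. Write $A = A^+ - A^-$ with $A^+,A^-$ the (nonnegative) positive and negative parts, and form the $2k\times 2k$ nonnegative block matrix
\[
\tilde A = \begin{pmatrix} A^+ & A^- \\ A^- & A^+ \end{pmatrix}.
\]
In the coordinates $d = x-y$ and $s = x+y$ one checks that $\tilde A$ acts as $A$ on differences and as $|A| = A^+ + A^-$ on sums, so $\tilde A^n = \begin{pmatrix} P_n & Q_n \\ Q_n & P_n \end{pmatrix}$ with $P_n - Q_n = A^n$ and $P_n + Q_n = |A|^n$. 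Hence $(A^n)_{1,k} = (\tilde A^n)_{1,k} - (\tilde A^n)_{1,2k}$ is a difference of two entries of a \emph{nonnegative} matrix, while the companion sum is $(|A|^n)_{1,k}$.

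Second, I would rescale by $R$, the largest row sum of $|A|$, so that $\hat A = \tilde A/R$ is substochastic (nonnegative, row sums $\le 1$) and $\hat A^n = \tilde A^n/R^n$. Appending one absorbing ``dead'' state that collects, in each row, the slack $1-\sum_j \hat A_{ij}\ge 0$ yields a genuinely stochastic matrix $B_0$ with $(B_0^n)_{ij} = (\hat A^n)_{ij}$ for $i,j\le 2k$. Writing $p_g := (B_0^n)_{1,k}$ and $p_b := (B_0^n)_{1,2k}$, we get $p_g - p_b = (A^n)_{1,k}/R^n$, and since $R^n>0$ this is zero (respectively positive) exactly when $(A^n)_{1,k}$ is.

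Third, it remains to express the comparison $p_g \bowtie p_b$ in the required form $\ve{v}^T B^n \ve{w}\bowtie c$ with $\ve{w}=(0,\ldots,0,1)$ selecting a single absorbing target $T$ and $\ve{v}$ stochastic. The idea is an \emph{affine readout}: add $T$ as the last state and route into it with probability $1$ from the ``good'' state $g$, probability $0$ from the ``bad'' state $b$, and probability $\tfrac12$ from every other non-target state (the remaining mass going to a sink), so that, started from $\ve{v}=e_1$ (which is stochastic), the probability of occupying $T$ one step after the time-$n$ core distribution equals $\tfrac12 + \tfrac12(p_g-p_b) = \tfrac12 + \tfrac1{2R^n}(A^n)_{1,k}$. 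As $u\mapsto \tfrac12 + \tfrac1{2R^n}u$ is increasing and sends $0$ to $\tfrac12$, taking $c=\tfrac12\in\Rat$ delivers both equivalences simultaneously. The main obstacle I anticipate is making this readout \emph{faithful for a single power} $B^n$: the collection transitions into $T$ belong to the one fixed, time-homogeneous matrix, so naively they fire at every step and leak mass out of the core during steps $1,\ldots,n-1$, corrupting $(\hat A^n)_{1,j}$. The real work is therefore the gadget that decouples ``run the core for $n$ steps'' from ``collect once''---for instance keeping the core block closed and performing the collection only via an index shift $n\mapsto n+1$, which is immaterial for the existential Skolem and universal positivity reductions that use this lemma. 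This is precisely the technical content supplied by \cite{AAOW15}, and the final matrix $B$ has dimension $2k$ plus a constant number of sink/target states, with $\ve{w}$ the last unit vector, $\ve{v}=e_1$, and $c=\tfrac12$.
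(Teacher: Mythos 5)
Your first two stages are sound: the sign-doubling block matrix $\tilde A$ does satisfy $\tilde A^n = \begin{pmatrix} P_n & Q_n \\ Q_n & P_n\end{pmatrix}$ with $P_n - Q_n = A^n$, and padding the rescaled substochastic matrix with an absorbing sink preserves the core powers, so $p_g - p_b = (A^n)_{1,k}/R^n$ as you claim. But the third stage is not a detail you may defer --- it is the entire content of the lemma, and your sketch of it fails. The readout must have the form $\ve{v}^T B^n \ve{w}$ with $\ve{v}$ stochastic and $\ve{w}$ a coordinate vector; both are nonnegative, so no such readout can ever form the \emph{difference} $p_g - p_b$ directly, which is why some affine-shift mechanism is unavoidable. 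Your proposed gadget (route into a target $T$ with probability $1$ from $g$, $0$ from $b$, $\tfrac12$ from the rest) cannot coexist with the core dynamics in a single time-homogeneous matrix, as you yourself note; and the repair you gesture at --- ``keeping the core block closed and performing the collection only via an index shift'' --- is contradictory: if the core is closed, $T$ is unreachable and $\ve{v}^T B^n \ve{w}=0$ identically. Even the natural fix of leaking uniformly at rate $\tfrac12$ from every core state into a transient readout layer (so the core is merely rescaled by $2^{-n}$) does not work: the readout then equals $\tfrac14(\text{total core mass at time } n-1) + \tfrac14(p_g - p_b)$, and the total core mass is \emph{not} constant, because the slack absorbed by your dead state differs from row to row (the rows of $|A|$ do not all sum to $R$). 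So the offset is $A$- and $n$-dependent, and no constant threshold $c$ exists for your chain.

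The construction that actually proves the lemma (Turakainen's trick, which is what \cite{AAOW15} use; the paper itself imports the lemma by citation rather than proving it) solves exactly this problem, and it handles signs and readout in one stroke, with no doubling. Embed $A$ into a $(k+2)\times(k+2)$ matrix
$$ M_2 = \begin{pmatrix} 0 & \mathbf{0}^T & 0 \\ -A\mathbf{1} & A & \mathbf{0} \\ \mathbf{1}^T A \mathbf{1} & -\mathbf{1}^T A & 0 \end{pmatrix}, $$
where the first auxiliary state has a zero \emph{row} (paths entering it die) and the second has a zero \emph{column} (paths can never enter it), so $(M_2^n)_{1,k} = (A^n)_{1,k}$ for all $n\geq 1$, while all row sums \emph{and} all column sums of $M_2$ are zero. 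Consequently $M_2 J = J M_2 = 0$ for the all-ones matrix $J$, hence $(M_2 + \kappa J)^n = M_2^n + \kappa^n m^{n-1} J$ where $m = k+2$; choosing $\kappa$ large enough to make all entries nonnegative and normalizing by the constant row sum $\kappa m$ gives a stochastic $B$ with
$$ (B^n)_{1,k} = \frac{(A^n)_{1,k}}{(\kappa m)^n} + \frac{1}{m}, $$
so $c = 1/m$, $\ve{v} = e_1$, and $\ve{w}$ the appropriate coordinate vector deliver both equivalences. The lesson relative to your attempt: the signed quantity must be kept in a \emph{single} entry throughout and shifted by an exactly uniform background, rather than split into two nonnegative entries whose difference you hope to extract at the end --- the uniformity of the background is precisely what the zero row- and column-sum condition buys, and it is what your row-dependent dead-state slack destroys.
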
 
As noted in  \cite{AAOW15}, it follows that the logic $\WMLOP$ is 
able to express the Skolem and positivity problems, 
using model checking questions of the form 
$$ M \models \exists t (\prob(p(t)) = c)$$ 
and
$$ M \models \exists t (\prob(p(t)) > c)$$ 
for $c$ a nonzero constant and $M$ a DTMC. 
(The ultimate positivity problem can also be expressed.) 
It is worth noting that in the special case of the constant $c=0$, 
these model checking questions \emph{are} decidable, 
as shown in \cite{BRS06}. (Essentially, in this case the problems 
reduce to  graph reachability problems, and the specific 
probabilities in $M$ are irrelevant.) The transformation from 
arbitrary matrices $A$ to stochastic matrices $B$ in Lemma~\ref{red-A-B}
requires that the constant $0$ of the Skolem problem be replaced by a non-zero constant $c$.  

The above model checking problems of the quantified logic $\WMLOP$ can be seen to be already expressible in the propositional logic $\PLTLsK$, as 
the problems 
\begin{subequations}\label{eq:formula_Skolem}
\begin{equation*}
M' \models^\clk \mathbf{AF} (\pr_i(p) =c) 
\end{equation*}
\begin{equation*}
M' \models^\clk \mathbf{AF} (\pr_i(p) > c)  
\end{equation*}
\begin{equation*}
M' \models^\clk\mathbf{AF} \mathbf{AG}(\pr_i(p) > c )
\end{equation*}
\end{subequations}
where we obtain the PO-DTMC $M'$ from the DTMC $M$ 
by defining $O_i(s) = \bot$ for all states $s$. 
That is, agent $i$ is blind, so considers all states 
reachable at time $n$ to be possible at time $n$. 
(We remark that this implies that all the operators $A$ can be exchanged with $E$ 
without change of meaning of the formulas.) 
It follows, that with respect to clock semantics, 
a resolution of the decidability of model checking even these simple 
formulas of $\PLTLsK$ for \emph{all} $c\in [0,1]$ would imply a resolution of the Skolem problem. 
In view of the effort already invested in the Skolem problem, this is 
likely to be highly nontrivial.

\subsection{Perfect Recall Semantics} 

Model checking with respect to the perfect recall semantics is undecidable, even with respect to a very simple 
fixed formula of the logic $\PLTLsK$, as shown by the following result. 

\begin{theorem} \label{thm:pr} 
The problem of determining, given a PO-DTMC $M$, if  $M \models^\spr EF (\prob_i(p)> c)$, for $p$ an atomic proposition, is 
undecidable. 
\end{theorem}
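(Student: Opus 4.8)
The plan is to reduce from the emptiness (threshold) problem for probabilistic finite automata (PFA), shown undecidable by the classical work of Rabin and Paz: given a PFA with initial distribution $\iota$, a stochastic matrix $M_a$ for each letter $a$ of an alphabet $\Sigma$, and an accepting set $F$ of states, it is undecidable whether there is a word $w$ whose acceptance probability $\iota^T M_w \mathbf{1}_F$ exceeds a given threshold (which I will identify with the constant $c$). The decisive observation is that, under the perfect recall semantics, the agent's subjective probability $\prob_i(p)$ at a point is exactly a normalized forward (belief) computation over the hidden state, conditioned on the observed history, and PFA acceptance is precisely such a forward computation. So I would design a PO-DTMC in which the agent's observation history encodes an input word while the hidden automaton state is tracked by the belief $\iota^T M_w$.

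Concretely, from a PFA I build a PO-DTMC $M'$ whose states are a distinguished start state $s_0$ together with all pairs $(q,a)$ for $q$ an automaton state and $a \in \Sigma$. The agent's observation is $O_i(s_0)=\mathit{start}$ and $O_i((q,a))=a$, so the agent sees each letter but not the automaton state. I set $PI(s_0)=1$, and let the transitions choose the next letter uniformly and then apply the corresponding automaton matrix: from $s_0$ go to $(q,a)$ with probability $\frac{1}{|\Sigma|}(\iota^T M_a)_q$, and from $(q,a)$ go to $(q',a')$ with probability $\frac{1}{|\Sigma|}(M_{a'})_{q,q'}$. Stochasticity of the $M_a$ makes both valid distributions. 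Finally I set $p \in \pi((q,a))$ iff $q \in F$.

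The core computation is the belief identity. For a word $w = a_1 \cdots a_m$, the probability of a run-prefix producing the observation history $\mathit{start}, a_1, \ldots, a_m$ and ending in hidden state $q$ is $\frac{1}{|\Sigma|^m}(\iota^T M_w)_q$, and summing over $q$ gives marginal $\frac{1}{|\Sigma|^m}$, because each $M_a$ is stochastic. In particular every finite word is realized with positive probability, so the non-zero-prior hypothesis on the sets $\kset_i$ holds and measurability follows from the earlier Lemma. Dividing, the belief over the hidden state conditioned on having observed $w$ is exactly the distribution $\iota^T M_w$, whence $\prob_i(p)$ at that point equals $\sum_{q \in F}(\iota^T M_w)_q = \iota^T M_w \mathbf{1}_F$, the PFA acceptance probability of $w$. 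Since all runs share the deterministic prefix $s_0$ at time $0$, unwinding $E = \neg A \neg$ and $F$ shows that $M' \models^\spr EF(\prob_i(p) > c)$ holds iff some run of $\R^\spr(M')$ reaches a point where $\prob_i(p) > c$, i.e.\ iff some word $w$ is accepted with probability $> c$. This is the undecidable PFA threshold problem, yielding the result.

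I expect the main obstacle to be the belief-state bookkeeping: aligning the two-phase construction (choose-letter-then-transition) so that the normalizing marginal is exactly $|\Sigma|^{-m}$ and the conditional distribution over hidden states is precisely $\iota^T M_w$, rather than some reweighting of it that would distort the comparison with $c$. A secondary point is the treatment of the threshold: if $c$ is required to be a single fixed constant rather than part of the input, I would note that the PFA threshold problem remains undecidable for a fixed cut-point (for instance $c = 1/2$), so the reduction can target that fixed value without loss of generality.
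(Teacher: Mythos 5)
Your proposal is correct and takes essentially the same route as the paper's proof: a reduction from the undecidable PFA emptiness/threshold problem in which letters become observations, the hidden automaton state is marked by $p$ on final states, and the perfect-recall belief conditioned on the observed word $w$ reproduces the PFA forward computation $\iota^T M_w$, so that $EF(\prob_i(p)>c)$ holds iff some word is accepted above the cut-point. The only difference is cosmetic: the paper uses states $Q\times\Sigma$ with initial distribution $PI(q,a)=\mu_0(q)/N$ and a dummy first observation, whereas you add an explicit start state, which if anything handles the empty-word corner case a little more cleanly.
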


\begin{proof} (Sketch) 
By reduction from the emptiness problem for probabilistic finite automata \cite{Paz71}. 
Intuitively, the proof sets up an association between words in the matrix semigroup and sequences of observations of the agent. 

A probabilistic finite automaton is a tuple $\PFA = (\PAS,\PAalph,\PAinit,\PAtrans, \PAfinal, \lambda)$, 
where   $\PAS$ is a finite set of states, $\PAalph$ is a finite alphabet, $\PAinit:\PAS \rightarrow [0,1]$ is a probability distribution 
over states, representing the initial distribution, $\PAtrans: \PAalph \rightarrow (\PAS\times\PAS \rightarrow[0,1])$ 
associates a transition probability matrix $\PAtrans(a)$ with each letter $a\in \PAalph$, component $\PAfinal \subseteq \PAS$ is a 
set of \emph{final} states, and 
$\lambda \in (0,1)$ is a rational number. Each matrix $\PAtrans(a)$ satisfies $\sum_{t\in S} \PAtrans(a)(s,t) = 1$ for all $s\in \PAS$. 
Let $v_\PAfinal$ be the column vector indexed by $\PAS$  with $v_\PAfinal(s) = 0$ if $s\not \in F$ and 
$v_\PAfinal(s) = 1$ if $s \in F$. Treating $\PAinit$ as a row vector, 
for each word $w = a_1 \ldots a_n \in \PAalph^+$, define $f(w) = \PAinit \PAtrans(a_1) \ldots  \PAtrans(a_n) v_\PAfinal$. 
The language accepted by the automaton is defined to be ${\cal L}(\PFA) =  \{w\in \PAalph^+~|~ f(w) > \lambda\}$.  
The emptiness problem for probabilistic finite automata is then, given a probabilistic finite automaton $\PFA$, 
to determine if the language ${\cal L}(\PFA)$ is empty.  
This problem is known to be undecidable \cite{Paz71,CondonLipton89}. 

Given a probabilistic finite automaton  $\PFA$, we construct 
an interpreted finite PO-DTCM $M_\PFA$ for a single agent (called $i$ rather than $1$ to avoid confusion with other numbers) such that $M_\PFA \models^\spr EF (\prob_i(p)> \lambda)$
iff $\PFA$ is nonempty. This system is defined as follows. We let $N = |\PAalph|$, 
\be
\item $S = \PAS \times \PAalph$, 
\item $PI(q,a) = \mu_0(q)/N$, 
\item  $PT((q,a), (q',b)) = \PAtrans(b)(q,q')/N$ 
\item $O_i((q,a)) = a$ 
\item $p \in \pi((q,a))$ iff  $q\in \PAfinal$.
\ee 
Note that $\sum_{(q,a) \in S} PI((q,a)) = \sum_{a\in \PAalph}\sum_{q\in \PAS} \mu_0(q)/N = \sum_{a\in \PAalph} 1/N = 1$, 
so $PI$ is in fact a distribution. Similarly, for each $(q,a) \in S$, we have 
$\sum_{(q',b)\in S} PT((q,a), (q',b)) = \sum_{b\in \PAalph}\sum_{q'\in \PAS}\PAtrans(b)(q,q')/N =  \sum_{b\in \PAalph} 1/N = 1$, 
so $PT$ is in fact a stochastic matrix. 

\newcommand{\B}{{\cal B}}

Note that for each $w = a_1\ldots a_n\in \PAalph^*$ and $a\in \PAalph$, we get a row vector 
$\mu_{aw} = \mu_0 \PAtrans(a_1) \ldots \PAtrans(a_n)$ with $\sum_{q\in \PAS} \mu_{aw}(q) =1$, 
which can be understood as a distribution on $\PAS$. For each run $r \in \R^\spr(M_\PFA)$
and $m\geq 0$, we have  that that agent $i$'s local state $r_i[0\ldots m]$ at $(r,m)$ is a word in $\PAalph^+$. 
Let $\B(q,m)$ be the set of runs $r\in  \R^\spr(M_\PFA)$ in which $r_e(m) = (q,a)$ for some $a\in \PAalph$. 
We claim the following about the 
probability measures $\mu_{r,m,i}$ in the probabilistic interpreted system $\I^\spr(M_\PFA)$, for each point $(r,m)$ and $q\in \PAS$: 
$$ \mu_{r,m,i}(\K_i(r,m)(\B(q,m))) = (\PAinit \PAtrans(r_i(1)) \ldots \PAtrans(r_i(m)))(q)~.$$
It is immediate from this that $\I^spr(M_\PFA) ,(r,m) \models \prob_i(p) = c$ where $c = f(r[1\ldots m])$, and the 
desired result follows. 
\end{proof} 

We remark that this result stands in contrast to the situation for model checking the 
logic of knowledge and time. Write $\mathbf{CLTL^*K}$ for the logic  obtained
from $\PLTLsK$ by omitting the probability comparison atoms $f(P_1, \ldots,P_k) \bowtie c$. 
Model checking the logic $\mathbf{CLTL^*K}$ with respect to perfect recall, 
i.e., deciding $M \models^\spr \phi$ for $M$ a PO-DTMC and $\phi$ a formula 
is decidable \cite{MeydenShilov}. (Here, for the semantic structures $M$, 
it suffices to replace the initial distribution $PI$ in $M$ by the set $I = \{s\in S~|~ PI(s)>0\}$, 
and replace the transition distribution function $PT$ in $M$ by the 
relation $R$ of possibility of transitions between states defined by $sRt$ if $PT(s,t)>0$.
The results in \cite{MeydenShilov} use linear time temporal logic as a basis, 
but, as noted in \cite{MeydenWong03}, the modality $A$ of the branching time logic $CTL^*$ can 
be understood as a special case of a knowledge modality: see Proposition~\ref{prop:trans}.) 
 
For probabilistic automata the minimum size of the state space giving undecidability directly stated in the literature appears to be 25 \cite{Hirvensalo06}.
We remark that the proof of Theorem~\ref{thm:pr}  can also be done by reduction of the following matrix semigroup problem: 
{\em given a finite set of matrices of order $n$, generating  a matrix semigroup $S$, determine whether there is $M\in S$ such that $(M)_{1n} =0$} \cite{Halava97}. 
The case of $k$ generators of size $n\times n$ can be reduced to probabilistic automata with $2kn+1$ states. 
Recent results on the matrix semigroup problem are given in \cite{CHHN14}.

Huang et al \cite{HSZ12} have previously used a reduction from probabilistic automata to 
show undecidability of an probabilistic epistemic logic with respect to perfect recall. 
Compared to our simple CTL temporal operators, their logic uses more expressive setting of 
alternating temporal logic  operators. 

\subsection{Clock Semantics} 

The undecidability of the perfect recall semantics for such simple formulas suggests that
we weaken the epistemic semantics to the clock case. The combination of 
the translation from $\PLTLsK$ to $\WMLOKP$ (Proposition~\ref{prop:trans}) 
and Theorem~\ref{thm:brs} then enables some cases of $\PLTLsK$ to be decided. 
We do not obtain a full decidability result, however, since we face the problem that, with respect to the clock semantics, 
the formula  $AF (\prob_i(p)=c)$ can express the Skolem problem, so resolving 
its decidability is a very difficult problem. Rather than attempt to resolve this question, 
we consider here just how much extra expressiveness is required
over the logic of Theorem~\ref{thm:brs} for us to obtain a definitive {\em undecidability} result, 
instead of a decidability result with some excluded and unresolved cases. 

Note that one of the restrictions on $\WMLOP$ used in Theorem~\ref{thm:brs} is 
that second order quantification should not cross into probability terms. It turns out 
that this restriction is essential, as shown by the following result. 

\begin{theorem} \label{thm:quantinundec}
It is is undecidable, given a PO-DTMC $M$ and a formula $\phi$ of $\WMLOKP$ with linear combinations of probability terms $\prob(\phi)$ and quantifying-in of second-order quantifiers, 
whether $M\models \phi$.  
\end{theorem}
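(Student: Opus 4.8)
The plan is to reduce from the undecidable model checking problem underlying Theorem~\ref{thm:pr}, reusing the automaton construction $M_\PFA$ from its proof, but phrasing the target property as a \emph{universal} statement so that after eliminating the agent operators no knowledge operator or universal modality survives --- only global probability terms $\prob(\cdot)$ combined linearly, together with second-order quantifying-in. Concretely, given a probabilistic finite automaton $\PFA$ with acceptance threshold $\lambda\in(0,1)$, I would build $M_\PFA$ exactly as in Theorem~\ref{thm:pr} and exhibit a single $\WMLOKP$ sentence $\phi$ of the required fragment with $M_\PFA\models^\spr\phi$ iff $\PFA$ is empty; undecidability of emptiness (as used in Theorem~\ref{thm:pr}) then gives the result.

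First I would recall from the proof of Theorem~\ref{thm:pr} the key semantic fact that in $\I^\spr(M_\PFA)$ the value of $\prob_i(p)$ at a point $(r,m)$ equals $f(w)$, where $w=O_i(r_e(1))\ldots O_i(r_e(m))$ is the observation word read by agent $i$ between times $1$ and $m$. I would then show directly that $\PFA$ is empty iff for \emph{every} run $r$ and every time $m$ one has $[\prob_i(p)]_{\I^\spr(M_\PFA),(r,m)}\leq\lambda$. One direction is immediate: if some point has value $>\lambda$, its observation word is accepted. The converse is the one small obstacle and requires a realizability argument: since $\lambda>0$, any accepted word $w=a_1\ldots a_m$ has $f(w)>0$, so some summand $\mu_0(q_0)A(a_1)(q_0,q_1)\cdots A(a_m)(q_{m-1},q_m)$ with $q_m\in F$ is positive; the state sequence $q_0,\ldots,q_m$ (paired with an arbitrary initial observation $a_0$ and any infinite extension) is then a genuine path of $M_\PFA$ whose observation word realises $f(w)>\lambda$ at time $m$. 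Hence the set of values of $\prob_i(p)$ realised over all points is exactly $\{f(w):w\in\Sigma^+\}$, and the stated equivalence follows. This is what lets the complementary $\leq$-formulation capture emptiness without an existential quantifier over runs, which would otherwise have forced a universal modality into $\phi$.

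Next I would render ``for all runs and times, $\prob_i(p)\leq\lambda$'' as the $\WMLOKP$ sentence $\forall u(\prob_{i,u}(p(u))\leq\lambda)$: by the semantics of $\WMLOKP$ sentences (which already quantify universally over runs at the top level) and of the agent term $\prob_{i,u}$, this sentence holds in $\I^\spr(M_\PFA)$ exactly when every point satisfies $\prob_i(p)\leq\lambda$, i.e.\ exactly when $\PFA$ is empty. It then remains only to move this sentence into the claimed fragment. Applying Proposition~\ref{prop:qi} to the simple agent-probability comparison $\prob_{i,u}(p(u))\leq\lambda$ replaces it by the linear comparison
$$\forall X_1\ldots X_{k_i}\big(\kappa_i(X_1,\ldots,X_{k_i},u)\rimp \prob\big(\kappa_i(X_1,\ldots,X_{k_i},u)\land p(u)\big)\leq\lambda\cdot\prob\big(\kappa_i(X_1,\ldots,X_{k_i},u)\big)\big),$$
where $\kappa_i$ is the formula of Proposition~\ref{prop:qi}. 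Its only probability terms are global terms $\prob(\cdot)$, combined linearly with coefficients $1$ and $-\lambda$, and its second-order quantifiers $\forall X_j$ occur inside the scope of $\prob$ through $\kappa_i$. The resulting sentence $\phi=\forall u(\ldots)$ uses neither $\prob_{i,u}$ nor any knowledge operator nor the universal modality, so it lies precisely in the fragment of $\WMLOKP$ named in the statement --- $\WMLOP$ extended only by linear combinations of $\prob(\cdot)$ and by second-order quantifying-in.

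Finally I would conclude: the map $\PFA\mapsto(M_\PFA,\phi)$ is computable, and $M_\PFA\models^\spr\phi$ iff $\PFA$ is empty, so undecidability of emptiness of probabilistic finite automata transfers to model checking $M\models\phi$ for formulas $\phi$ of this fragment. As indicated, the only genuinely delicate point is the realizability step bridging ``words realisable as observation histories'' and ``all accepted words'', which holds because $\lambda>0$; the remainder is a direct application of Proposition~\ref{prop:qi}.
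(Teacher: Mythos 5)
Your proposal is correct and rests on exactly the same two pillars as the paper's own (very terse) proof: the construction $M_\PFA$ from the probabilistic-automaton reduction of Theorem~\ref{thm:pr}, and the quantifying-in elimination of Proposition~\ref{prop:qi}. Where you go beyond the paper is in how the undecidable property is phrased, and this addition is genuinely needed rather than cosmetic. The paper simply cites Theorem~\ref{thm:pr}, whose formula $EF(\prob_i(p)>c)$ contains a branching operator; pushing that formula through Propositions~\ref{prop:trans} and~\ref{prop:qi} leaves a residual knowledge operator, since Proposition~\ref{prop:trans} renders $A$ as $K_{\top,t}$ and Proposition~\ref{prop:qi} only trades knowledge operators for $\Box=K_{\bot,t}$, which is still epistemic --- yet the remark following the theorem insists the target fragment (a generalization of $\WMLOP$) contains no epistemic operators at all. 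Your dualization to the emptiness statement $\forall u(\prob_{i,u}(p(u))\leq\lambda)$, justified by the realizability of observation words, lets the universal quantification over runs built into $\models$ absorb the path quantifier, so that after applying Proposition~\ref{prop:qi} nothing epistemic survives and the sentence lies squarely in the stated fragment. One shared blemish: points $(r,0)$ realize the value $f(\epsilon)$, while ${\cal L}(\PFA)\subseteq\PAalph^+$ excludes the empty word, so your claim that the realized values are exactly $\{f(w):w\in\PAalph^+\}$ is slightly off; both your reduction and Theorem~\ref{thm:pr} itself need the harmless normalization that no initial state of $\PFA$ is final (hence $f(\epsilon)\leq\lambda$), under which emptiness remains undecidable. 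With that standard proviso, your argument is a faithful --- indeed more careful --- implementation of the paper's proof.
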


\begin{proof} 
This follows from the fact that, using second order quantifying-in, we can express 
perfect recall (Proposition~\ref{prop:qi}), and the undecidability of model checking perfect recall (Theorem~\ref{thm:pr}). 
\end{proof} 

Note that the result refers to $\models$ rather than $\models^\clk$, since epistemic operators are
not required. This is really a result about a generalization of $\WMLOP$.
One of the other restrictions in Theorem~\ref{thm:brs} 
is that only simple probability comparisons of the form 
$\prob(\phi) \bowtie c$ are permitted. 
More general comparisons of  probability terms are needed in applications (see discussion in Section~\ref{sec:discussion}), so it is of interest to study their 
impact on decidability. Unfortunately, it turns out to be quite negative. Even the 
simple case of mixed time polynomial atomic probability formulas is enough for undecidability.

\begin{theorem} \label{thm:mixedtime}  
There exists  a fixed PO-DTMC $M$ with 4 states such that it is undecidable, 
given a mixed-time polynomial  atomic probability formula  $\psi$, whether $M \models \psi$.   
\end{theorem}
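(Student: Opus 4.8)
The plan is to reduce from Hilbert's tenth problem, i.e., the (undecidable) question of whether a polynomial $D(x_1,\ldots,x_k)$ with integer coefficients has a root in $\Nat^k$. All of the computational content will be placed in the formula $\psi$, so the chain $M$ can be fixed once and for all. First I would fix a small chain whose state-occupation probabilities are exactly $\prob(q(t)) = 2^{-t}$ and $\prob(p(t)) = t\cdot 2^{-t}$ for two atomic propositions $p,q$. Such a chain arises from a two-state cascade with a sink: a state $s_1$ that self-loops with probability $\frac{1}{2}$ and moves to $s_2$ with probability $\frac{1}{2}$, a state $s_2$ that self-loops with probability $\frac{1}{2}$ and moves to an absorbing sink with probability $\frac{1}{2}$, together with $PI(s_1)=1$ and $q$ true only at $s_1$, $p$ true only at $s_2$. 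Summing over the single step at which the chain leaves $s_1$ gives $\prob(q(t)) = (\frac{1}{2})^t$ and $\prob(p(t)) = \sum_{j=0}^{t-1}(\frac{1}{2})^t = t\,(\frac{1}{2})^t$, as required; this uses only three states, leaving room within the bound of $4$.

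The crux is that these two quantities recover both addition and multiplication of the time variables inside a single polynomial over probability terms. The key identity is $\prob(p(t)) = t\cdot\prob(q(t))$, so that the ratio $\prob(p(t))/\prob(q(t))$ equals $t$. Hence, realising each natural-number variable $x_i$ by a time variable $t_i$, a monomial $\prod_i t_i^{a_i}$ equals $\prod_i \prob(p(t_i))^{a_i}/\prob(q(t_i))^{a_i}$. To avoid division, and to stay within the atomic form $\prob(p(t_j))$ demanded by a mixed-time polynomial atomic probability formula, I clear denominators: letting $d_i$ be the degree of $t_i$ in $D$, I use that $t_i^{a_i}\,\prob(q(t_i))^{d_i} = \prob(p(t_i))^{a_i}\,\prob(q(t_i))^{d_i-a_i}$ with $d_i-a_i\ge 0$. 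Replacing each monomial $c\prod_i t_i^{a_i}$ of $D$ by $c\prod_i \prob(p(t_i))^{a_i}\prob(q(t_i))^{d_i-a_i}$ yields a genuine rational polynomial $f$ in the $2k$ terms $\prob(p(t_i)),\prob(q(t_i))$ which, on the actual probability values, evaluates to $D(t_1,\ldots,t_k)\cdot\prod_i 2^{-d_i t_i}$.

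Since $\prod_i 2^{-d_i t_i}>0$ for every choice of times, $f=0$ holds exactly when $D(t_1,\ldots,t_k)=0$. Therefore the mixed-time polynomial atomic probability formula $\psi_D := \exists t_1\ldots t_k\,(f=0)$ satisfies $M\models\psi_D$ iff $D$ has a root in $\Nat^k$; as $D\mapsto\psi_D$ is computable and $M$ is fixed, undecidability of Hilbert's tenth problem transfers to the model-checking question. The main obstacle is engineering the gadget chain so that the two occupation probabilities are \emph{exactly} $2^{-t}$ and $t\,2^{-t}$ (the factor of $t$ being the repeated-eigenvalue phenomenon produced by the cascade), and then checking that the denominator-clearing substitution produces a polynomial of the restricted atomic form while preserving its zero set; the remaining steps are routine once these identities are in hand.
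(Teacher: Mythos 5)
Your proposal is correct and follows essentially the same route as the paper: a reduction from Hilbert's tenth problem via a fixed chain whose occupation probabilities are $\lambda^{t}$ and $t\cdot\lambda^{t}$ with $\lambda = 1/2$, together with a variable-doubling, denominator-clearing transformation of the given polynomial into a rational polynomial over the probability terms whose value is $\prod_i \lambda^{d_i t_i}\cdot D(t_1,\ldots,t_k)$, hence with the same zero set. Your explicit three-state cascade and the substitution $t_i^{a_i}\,\prob(q(t_i))^{d_i} = \prob(p(t_i))^{a_i}\,\prob(q(t_i))^{d_i-a_i}$ concretely instantiate what the paper's proof only sketches (its stochastic matrix, vectors, and ``variant polynomial $q'$ over a larger set of variables'').
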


\begin{proof} 
By reduction from Hilbert's tenth problem, i.e., the problem 
of determining whether a polynomial with integer coefficients 
has solutions in the natural numbers. This was shown to be 
undecidable by Matiyasevich \cite{Matiyasevich}. 

We show that we can find a stochastic matrix $M$ 
and a stochastic vector $\ve{f}$ 
such that for each function $f(t) = t\cdot \lambda^{t}$ and $f(t) = \lambda^{t}$ with 
$\lambda = 1/2$, 
there 
is a rational vector $\ve{g}$
such that  $f(t) = \ve{f}^T M^t\ve{g}$. Given a polynomial $p(n_1, \ldots, n_k)$, we can construct a variant polynomial 
$q'$ over a larger set of variables, such that an appropriate 
substitution of such functions $t_i\cdot \lambda^{t_i}$ and $\lambda^{t_i}$, 
for the $n_i$ and the additional variables yields an 
expression 
$\lambda^{d_1 t_1 + \ldots + d_k t_k}\cdot p(t_1, \ldots, t_k)$, where the $d_i$ are constants. 
This has a zero in the $t_1\ldots t_n$ iff  $p(x_1, \ldots, x_n)$ has a zero. 
It follows 
 that  mixed-time polynomial  atomic probability formulas can express Hilbert's tenth problem. 
\end{proof} 

We remark that the possibility of encoding Hilbert's tenth problem is not 
immediate from the fact that we are dealing with polynomials, since
our polynomials are over \emph{rational}  values generated in a very specific way
from Markov chains, rather than arbitrary integers. Indeed, there are decidable logics containing polynomials, 
such as the theory of real closed fields \cite{Tar1}. 

As noted in Section~\ref{sec:discussion}, formulas 
(allowed by Theorem~\ref{thm:brs}) of the form $\prob(\phi(t_1, \ldots, t_n)) \bowtie c$ can be written as a polynomial of probability expressions, so it is natural 
to ask whether such formulas also suffice  to make the logic undecidable. 
This does not seem to be the case:  the polynomials involved have
only positive coefficients. Since Hilbert's tenth problem is trivially decidable for 
polynomials with only positive coefficients, our proof does not apply to this case.

\section{ Conclusion} \label{sec:concl} 

Our results have by no means resolved Skolem's problem, which remains an apparent barrier
to resolving the gap between the decidability results of \cite{BRS06} and the 
undecidability results of the present paper. 

However, in work to be presented elsewhere, 
we show that the results of \cite{BRS06} can be extended both by reducing the set $H_\phi$ of cases that 
needs to be excluded to obtain decidability, as well as enhancing the 
expressiveness  to cover epistemic probabilistic terms of the form $\prob_i(\phi)$, interpreted with 
respect to the clock semantics. \\[5pt]

\noindent
{\bf Acknowledgment:} The authors thank Igor Shparlinski and Min Sah for illuminating discussions on Skolem's problem, and Xiaowei Huang 
for helpful comments on a draft of the paper.

\bibliographystyle{eptcs} 
\bibliography{pxn}

\end{document}